\newcommand{\sm}{\setminus}
\newcommand{\real}{\mathbb{R}}
\newcommand{\stwo}{\mathbb{S}^2}
\newcommand{\nat}{\mathbb{N}}
\newcommand{\ep}{\epsilon}
\newcommand{\ffi}{\varphi}
\newcommand{\intstwo}{\int_{\mathbb{S}^2}}
\newcommand{\wto}{\rightharpoonup}
\newcommand{\la}{\langle}
\newcommand{\ra}{\rangle}
\newcommand{\ow}{w}
\newcommand{\vt}{\widetilde{U}}
\newcommand{\dif}{\mathrm{d}}
\newcommand{\diff}{\,\mathrm{d}}
\newcommand{\e}{\mathrm{e}}
\newcommand{\norm}[1]{\ensuremath{\left\Vert #1 \right\Vert }}
\DeclareMathOperator \vect{span}
\DeclareMathOperator \rge{rge}
\let\le=\leqslant
\let\ge= \geqslant
\newcommand{\abs}[1]{\left\vert #1 \right\vert}
\newcommand{\caD}{{\mathcal D}}
\newcommand{\caE}{{\mathcal E}}
\newcommand{\caF}{{\mathcal F}}
\newcommand{\caM}{{\mathcal M}}
\newcommand{\caO}{{\mathcal O}}
\newcommand{\caS}{{\mathcal S}}
\newcommand{\caU}{{\mathcal U}}
\newcommand{\bbN}{{\mathbb N}}
\newcommand{\bbR}{{\mathbb R}}
\def\XXint#1#2#3{{\setbox0=\hbox{$#1{#2#3}{\int}$ }
\vcenter{\hbox{$#2#3$ }}\kern-.59\wd0}}
\newtheorem{thm}{Theorem}
\newtheorem{A}{Assumption}
\newtheorem{lem}[thm]{Lemma}
\newtheorem{prop}[thm]{Proposition}
\newtheorem{rem}{Remark}
\newenvironment{proofof}
{\medskip\noindent{\it Proof of}}{\nolinebreak\hfill$\Box$\medskip}
\begin{document}

\title[Mean-field and phase transitions for nematic liquid crystals]
{Mean-field limit and phase transitions for nematic liquid crystals in the continuum}

\author[S. Bachmann]{Sven Bachmann}
\address{Mathematisches Institut der Universit\"at M\"unchen, Theresienstr.~39, D-80333
M\"unchen, Germany}
\email{sven.bachmann@math.lmu.de}

\author[F. Genoud]{Fran\c cois Genoud}
\address{Delft Institute of Applied Mathematics \\
Delft University of Technology \\
Mekelweg 4 \\
2628 CD Delft \\ The Netherlands}
\email{S.F.Genoud@tudelft.nl}

\begin{abstract}
We discuss thermotropic nematic liquid crystals in the mean-field regime. 
In the first part of this article, we rigorously carry out the mean-field limit 
of a system of $N$ rod-like particles as $N\to\infty$,
which yields an effective `one-body' free energy functional. 
In the second part, we focus on spatially homogeneous systems, for which 
we study the associated Euler--Lagrange equation, 
with a focus on phase transitions for general axisymmetric potentials. 
We prove that the system is isotropic at high temperature, while anisotropic 
distributions appear through a transcritical bifurcation as the temperature is lowered. 
Finally, as the temperature goes to zero we also prove, in the concrete case of 
the Maier--Saupe potential, that the system converges to perfect nematic order.
\end{abstract}

\maketitle

\begin{center}
\emph{Dedicated to Charles-Edouard Pfister, a mentor and a friend.}
\end{center}

\section{Introduction}\label{Intro}

Besides their obvious and ubiquitous importance in practical applications, liquid crystals are of considerable theoretical and mathematical interest. 
There is a temperature range in which these large molecules behave on the one hand like fluids, in the sense that their 
centres of mass respect translation invariance at the macroscopic level, but on the other tend to align their orientations, thereby breaking the rotational invariance. In other words, there is no long-range positional order, while orientational order may occur. In this article we consider so-called {\em thermotropic} liquid crystals, where the phase transition is driven by temperature. We further restrict our study to the case of {\em nematic} liquid crystals,
which are formed of long rod-like molecules, and tend to have long-range orientational order
parallel to their long axis. We refer the reader to \cite{Gennes:1995aa} for more details on the physics of liquid crystals.

The goal of this article is to derive an effective 
theory for the equilibrium state of nematic liquid crystals in the continuum from exact microscopic statistical mechanics. Physical considerations were used long ago to obtain such theories~\cite{Onsager:1949aa, Maier:1958aa, Maier:1959aa, Maier:1960aa, oseen,frank,ericksen_1976,ericksen_1990} --- see~\cite{Gennes:1995aa} for a complete overview --- but a mathematical control of the approximations involved has been lacking for decades. While the general programme of rigorously deriving effective equations for the thermal equilibrium state in the form of a \emph{mean-field limit} was carried out in~\cite{Messer:1982aa} in the classical case, in~\cite{Fannes:1980hd} in the quantum case and in~\cite{Petz:1989to,Raggio:1989gc} in the C*-algebraic setting, the concrete problem of liquid crystals was somewhat overlooked. Recent developments related to cold Bose gases have revived the subject \cite{Frohlich:2009be,Knowles:2010hy,Lewin:2014aa,Lewin:2015aa} and focused much more on the Gross--Pitaevskii limit \cite{Lieb:2000aa,Lieb:2005aa}. Parallel results on the continuum limit from lattice models~\cite{Cicalese:2009aa} have led to different effective theories that we will not address here. 
A few exact results about liquid crystals modelled by classical dimers or $k$-mers on lattices have also been obtained, see e.g.~\cite{heilmann1972,Disertori:2013ff,Giuliani:2014aa}. In the continuum, the appearance of orientational order at low temperature and high density was proved in~\cite{Gruber:2002ta,Romano:1998uw}.

Effective theories for liquid crystals at equilibrium 
have been derived at essentially two different levels. At the
{\em statistical level} \cite{Onsager:1949aa, Maier:1958aa, Maier:1959aa, Maier:1960aa}, the central object of the theory is the distribution of orientations of a molecule immersed in an
averaged `molecular-field' potential due to all other molecules. Such theories, derived by formal arguments from microscopic models, typically describe
bulk systems, with no account taken of boundary conditions, external fields, etc. Phase 
transitions are studied via an effective free energy, expressed by means of
an averaged order parameter revealing the anisotropy of the system. 
On the other hand,
at the {\em macroscopic level},
phenomenological theories \`a la Landau describe liquid crystals in terms of
a vector or tensor field of order parameters characterising the long-range orientational
order of the system. The free energy is then a functional of such
fields, to be minimised by variational methods, and giving rise to nonlinear 
partial differential equations (PDE) describing the liquid crystal phase,
see \cite{oseen,frank,ericksen_1976,ericksen_1990} and references
therein. These models have proved successful to describe numerous physical scenarios,
including external fields, boundary conditions and domain walls, and are routinely used in applications.
The relevant PDEs/minimisation problems 
have been the object of a substantial amount of work, 
see e.g.~the review paper \cite{ll} or the more recent contribution \cite{mz}.
Finally, the connection between theories at the statistical level and 
the macroscopic level has also been investigated by several authors, see 
e.g.~\cite{Katriel:1986gr} for a statistical physics approach, and \cite{bm,han} for 
PDE/variational arguments. 
To summarise, phenomenological macroscopic theories
are reasonably well understood mathematically and the relation of the latter with statistical
theories has already seen significant progress. However, the foundations of the statistical theories themselves, despite being physically clear, is mathematically mostly heuristic.
We shall take here a first step towards bridging this gap.

Concretely, we consider a particular limiting regime of a system of $N$ rod-like molecules confined to a fixed bounded domain $\Lambda\subset\bbR^d$, as $N\to\infty$. The particles interact with a two-body potential whose range is of the order of the size of the container, and the balance between minimising the total energy $\caE_N$ of the system and maximising the entropy $\caS_N$ (multiplied by the inverse temperature) may lead to the nematic phase transition briefly described above. From a thermodynamics perspective, both energy and entropy are expected to be extensive quantities, namely $\caE_N = \caO(N)$ and $\caS_N = \caO(N)$. Since the number of pairs of particles grows quadratically with $N$, it is a natural Ansatz to scale the strength of the two-body interaction by $N^{-1}$ as the limit $N\to\infty$ is carried out. We will refer to this as the mean-field scaling: this is the regime of very many, very weak collisions as the number of particles diverges.

The exact form of the potential will not matter very much for most of the following, but the nature of the liquid crystals is reflected in the fact that the potential depends not only on the positions $x,x'\in\Lambda$ of the two particles, but also on their orientations $m,m'\in M$, where $M$ is (typically) the upper half-sphere. As is physically relevant, the angular part of the potential shall have a minimum which may induce an alignment of the molecules. For technical reasons, singularities can only be repulsive in our model, 
and in fact, the full potential is bounded below. Concretely, in three dimensions, we have in mind the example
\begin{equation*}
\frac{A_N\exp(-\xi_N\vert x - x' \vert)}{\xi_N \vert x - x' \vert}\sin^2(\measuredangle(m,m')),
\end{equation*}
for some $A_N,\xi_N>0$, and with
$\measuredangle(m,m')$ the angle between directions $m$ and $m'$. Here, the mean-field scaling corresponds to choosing $\xi_N = \xi$, while $A_N = N^{-1} A$, for some fixed $A,\xi>0$.

We will prove that the full equilibrium statistical mechanics of the system reduces to a `one-body' problem as $N\to\infty$.  
The Gibbs postulate states that the equilibrium states are completely characterised as minimisers of the free energy functional $\caF_N$, namely the difference between energy and entropy. Here we will show that, in the scaling limit, such minimisers converge to a superposition of product states that minimise a one-particle free energy functional.  
This `mean-field functional' is easily derived by formally computing the free energy density associated to a product state in the limit $N\to\infty$.

The second part of this article is devoted to the problem of the nematic phase transition itself in the framework of 
the effective theory. Here and for the rest of the article, we shall restrict our attention to spatially homogeneous solutions. For this purpose we study the solution set of the Euler--Lagrange equation associated with the effective free energy functional, as a function of the inverse temperature. In the spatially homogenous case, any minimiser --- namely any thermal equilibrium state --- is a solution of the nonlinear self-consistency equation
\begin{equation}\label{self}
\nu(m) = Z_\beta(\nu)^{-1} \exp\left(-\beta  \int_M U(m,m')\nu(m')\diff m'\right),\quad Z_\beta(\nu) := \int_M \e^{-\beta \int_M U(m,m')\nu(m')\diff m'} \diff m,
\end{equation}
where $U(m,m')$ is the angular part of the potential, and $\beta$ is the inverse temperature, $\beta = (kT)^{-1}$, where $k$ is Boltzmann's constant and $T$ the temperature. It is easy to observe, and physically clear, that at high temperature the system has a unique thermal state. Namely, there is a unique minimiser, given by the homogeneous and isotropic distribution. This is the disordered phase, corresponding to a branch of solutions existing for all values of the temperature --- referred to as the `isotropic branch' below ---, but yielding the stable state at high temperatures only.

We then carry out a local bifurcation analysis from the isotropic branch around a temperature $T_\star$ at which a transcritical bifurcation occurs, under general sufficient conditions. This yields the existence of additional branches of solutions around this point, consisting of non-isotropic states. These indicate that the molecules favour alignment along a particular direction, commonly referred to as the `director'. Since the orientation of the director is arbitrary, the result shows that, for low enough temperatures, the rotation symmetry is broken and there exists an infinite number of equilibrium states: this is the mathematical condition for a phase transition.

Finally, we consider a specific angular potential originally introduced in the discussion of liquid crystals by Maier and Saupe~\cite{Maier:1958aa}. In fact, since rods carry an orientation but no direction the first nontrivial term in a multipole expansion of any molecular potential is the quadrupole, of which the Maier--Saupe interaction is the axially symmetric part. For this very specific but physically relevant interaction, we compute explicitly the transition temperature $T_\star$ as a function of the parameters of the model, recovering the result of~\cite{slast}, where the transcritical bifurcation was already pointed out.
In the context of `Onsager interactions', a transcritical bifurcation was also exhibited in \cite{vollmer}.

In the last part of our analysis, we study the behaviour of the equilibrium states in
the limit of small temperatures. We observe that, in a precise sense, they converge to the pure,
completely aligned state characterising the system at zero temperature. We analyse the phase diagram in terms of the typical order parameter given by the statistical average $\langle\sin^2(\theta)\rangle$,
where $\theta$ is the angle between a molecule and the director. For this observable, we exhibit a differentiable branch converging to $0$ --- which characterises a perfect (prolate) nematic state --- as the temperature is lowered to $0$ 
(i.e.~as $\beta\to\infty$). This yields an asymptotic behaviour that can be inferred from the explicit solutions of~\cite{slast,const}. The proof we present is based on Laplace's method and the implicit function theorem, 
a strategy we believe could be useful to handle more general axisymmetric potentials than the 
Maier--Saupe interaction, and more general order parameters.

Both the transcritical bifurcation occurring at $T_\star$ and the zero temperature limit illustrate, in a rigorous mathematical formulation, the bifurcation diagram obtained in \cite{Maier:1958aa,Maier:1959aa}, 
which we discuss at a heuristic level
at the end of Section~\ref{Results}. However, most of our results extend beyond the Maier--Saupe potential. The analysis is therefore more involved, as we cannot reduce the discussion of the phase diagram to the behaviour of the eigenvalues of the quadrupole matrix, as was done in \cite{slast,const}.

Before going into more details, we would like to emphasise that the mean-field limit in classical statistical mechanics was essentially treated already by Messer and Spohn in~\cite{Messer:1982aa}, where the associated Euler--Lagrange equation is called the `isothermal Lane--Emden equation'. The proof we present here follows closely \cite{Messer:1982aa} --- even though we allow for repulsively unbounded interactions, which were not considered there --- and the even earlier results of~\cite{robinson1967} about the entropy density. Similar arguments were used in~\cite{Kiessling:1989vd} to discuss the thermodynamics of gravitating systems, and further in~\cite{Caglioti:1992aa,Lions:2000uh} in the different context of vortices of the Euler and Navier--Stokes equations. 
We are also indebted to~\cite{Rougerie:2014aa} for a clear overview of the subject.

The rest of the paper is organised as follows. In Section~\ref{Results}, we introduce the basic mathematical and statistical-mechanical setup to describe the model and state our results, namely the convergence in the mean-field limit, the abstract bifurcation analysis, and the concrete existence of a nematic phase transition in the case of the Maier--Saupe model. Section~\ref{Proofs} then presents the proofs. 

\subsection*{Acknowledgement}The authors would like to thank Margherita Disertori for very helpful discussions. 
They are also indebted to an anonymous referee
who pointed out a mistake in an earlier version of the manuscript, and helped improve 
the general presentation of the paper.


\section{Setting and results}\label{Results}

In this section we lay down the mathematical formalism we shall use, and we give a unified
presentation of our main results, which will be proved in Section~\ref{Proofs}.


\subsection{Mean-field limit}
We consider a system of $N$ classical identical particles confined in a connected, 
compact domain $\Lambda\subset\bbR^d$ and
carrying an internal degree of freedom. 
The configuration space $\Omega^N$ is given by
\begin{equation*}
\Omega^N := (\Lambda\times M)^N,
\end{equation*}
where $M$ is a smooth, connected, compact manifold. 
For the application we have in mind, namely nematic liquid crystals, $M = \bbR P^{d-1}$, the real projective space. We shall denote by $(x,m)$ points in $\Lambda\times M$, while a point in the full configuration 
space will be 
\[
X_1^N = (X_1,\ldots,X_N)\in \Omega^N, \quad X_i=(x_i,m_i), \quad i=1,\dots,N.
\] 
Similarly, for $1\le j \le k \le N$, we shall also use the shorthand notation
\begin{equation*}
X_j^k = (X_j,\dots,X_k), \qquad \dif X_j^k = (\dif X_j,\dots,\dif X_k).
\end{equation*}

An $N$-particle state is a probability measure over $\Omega^N$, the set of which we denote $\caE(\Omega^N)$. The finite volume, finite particle number equilibrium state is given by the Gibbs measure at 
inverse temperature $\beta = (kT)^{-1}$, where $k$ is Boltzmann's constant and $T$ is the temperature. 
Explicitly, this is the following absolutely continuous probability measure $\mu_{N,\beta}$
on $\Omega^N$ with density $\rho_{N,\beta}$:
\begin{equation*}
\mu_{N,\beta}( \dif X_1^N) = \rho_{N,\beta}(X_1^N)\diff X_1^N = \frac{1}{Z_{N}(\beta)} \exp\left(-\beta V_{N}(X_1^N)\right)\dif X_1^N,
\end{equation*}
where 
\[
Z_{N}(\beta) = \int_{\Omega^N}\exp\left(-\beta V_{N}(X_1^N)\right) \diff X_1^N
\]
is the partition function, ensuring normalisation of the equilibrium state.
The interaction energy of the particles in a configuration $X_1^N$ is given by the two-body potential
\begin{equation*}
V_{N}(X_1^N) := \frac{1}{N-1}\sum_{1\le i<j\le N} v\left(\abs{x_i-x_j}; m_i,m_j\right),
\end{equation*}
which we shall refer to as the \emph{mean-field scaling}. We will carry out the mean-field limit, $N\to\infty$, 
under the following assumption, where $D=\sup_{x,y\in\Lambda}|x-y|$ is the diameter of the region $\Lambda$.
\begin{A}\label{A:VBasic}
The function $v:(0,\infty)\times M\times M \to\bbR$
satisfies the following conditions:
\begin{enumerate}
\item {\rm Symmetry:} $v(r;m,m') = v(r;m',m)$.
\item {\rm Lower semi-continuity:} for all $(m,m')\in M\times M$, there holds 
$$r=\lim_{n\to\infty} r_n \implies v(r;m',m) \le \liminf_{n\to\infty} v(r_n;m',m).$$
\item {\rm Integrability:} for all $(m,m')\in M\times M$,
$r\mapsto r^{d-1}v(r,m,m')$ is integrable over $(0,D)$, and the function
$J(m,m'):=\omega_d \int_0^D r^{d-1} |v(r,m,m')| \diff r$ is bounded on $M\times M$, 
where $\omega_d$ is the surface of the unit sphere $\mathbb{S}^{d-1}$.
\end{enumerate}
\end{A}
It follows from Assumption~\ref{A:VBasic} that $V_{N}$ is invariant under permutation, i.e.
$$V_N(X_1,\ldots X_N) = V_N(X_{\pi(1)},\ldots X_{\pi(N)})$$ for any $\pi$ in the symmetric group, 
and lower semi-continuous with respect to $x_1^N$. In the following, 
we shall sometimes write $v(X,Y)$ for $v(\vert x-y\vert;m,m')$, which is a 
lower semi-continuous function of $(x,y)\in\Lambda\times\Lambda$.

\begin{rem}
\rm
Since $\Omega$ is compact, lower semi-continuity implies boundedness below. 
In particular, singularities can only be repulsive. Furthermore,
\begin{equation*}
\frac{1}{Z_{N}(\beta)} \exp\left(-\beta V_{N}(X_1^N)\right) = \frac{1}{\tilde Z_{N}(\beta)} \exp\big(-\beta \tilde V_{N}(X_1^N)\big),
\end{equation*}
where $\tilde Z_{N}, \tilde V_N$ are obtained from the replacement $v\mapsto \tilde v = v-\min v$. 
Therefore, in order to simplify the proof of Theorem~\ref{effective theory}, {\em we will assume 
without loss of generality that $v\ge 0$.}
\end{rem}

Finally, we define the reduced $k$-particle density by
\begin{equation*}
\rho_{N,\beta}^k(X_1^k) := \int_{\Omega^{N-k}}\rho_{N,\beta}(X_1^N)\diff X_{k+1}^N, 
\quad 1\le k\le N,
\end{equation*}
namely the marginal density of $\mu_{N,\beta}$, still normalised to be a probability density. 
Note that the choice of indices to be integrated out is irrelevant,
thanks to the symmetry assumption on the potential.

For any finite $N\in\bbN$, the Gibbs measure $\mu_{N,\beta}$ is the unique minimiser among
all probability measures $\nu_N$ on $\Omega^N$ of the free energy functional
\begin{equation*}
\caF_{N,\beta}(\nu_N):= \caE_N(\nu_N) - \beta^{-1} \caS_{N}(\nu_N),
\end{equation*}
where
\begin{align*}
\caE_N(\nu_N) &:= \int_{\Omega^N} V_{N}(X_1^N) \diff\nu_N(X_1^N), \\
\caS_{N}(\nu_N) &:= \begin{cases} - \int_{\Omega^N} \nu_N(X_1^N) \ln\nu_N(X_1^N) \diff X_1^N 
&\text{if }\nu_N\text{ is absolutely continuous}, \\ -\infty &\text{otherwise}. \end{cases}
\end{align*}
Namely, $\caF_{N,\beta}(\nu_N)$ is the difference between the \emph{total energy} and 
the \emph{entropy} of the system at inverse temperature $\beta$. We let 
\[
F_{N,\beta} :=\caF_{N,\beta}(\mu_{N,\beta})
\] 
so that $F_{N,\beta}\le \caF_{N,\beta}(\nu_N)$ for all probability measures $\nu_N$ by the variational principle.

Note that while the entropy depends on the full measure, 
the energy functional depends only on its second marginal, namely
\begin{equation}\label{energy}
\int_{\Omega^N} V_{N}(X_1^N) \diff \nu_N(X_1^N)
= \frac{N}{2}\int_{\Omega^2} v\left(\abs{x-y}; m,m'\right) \diff\nu_N^2(X,Y),
\end{equation}
where we used the symmetry of the two-body potential $v$. 
For a product measure $\nu_N = \nu^{\otimes N}$, the free energy density reads
\begin{equation}\label{product}
N^{-1}\caF_{N,\beta}(\nu^{\otimes N}) = \frac{1}{2}\int_{\Omega^2} v(X,Y) \diff\nu(X)\diff\nu(Y)  
- \beta^{-1} \caS_1(\nu)=:\mathfrak{f}_\beta(\nu).
\end{equation}

We are now equipped to state the results of this paper. 
The first theorem states that minimising $\caF_{N,\beta}$ over product measures, 
which a priori only yields an upper bound on the global minimum for all $N$, 
in fact produces the exact thermal states as $N\to\infty$. 
\begin{thm}\label{effective theory}
Fix $0<\beta<\infty$.
With the notation above, let $\caM_\beta$ be the set of minimisers of $\mathfrak{f}_\beta$, 
and let $f_\beta$ be the corresponding minimum. If Assumption~\ref{A:VBasic} holds, then
\begin{equation*}
\lim_{N\to\infty} N^{-1}F_{N,\beta} = f_\beta.
\end{equation*}
Furthermore, if $\mu_{\infty,\beta}$ is an accumulation point of $\mu_{N,\beta}$ in the weak-* topology, then
there is a probability measure $P_\beta$ supported on $\caM_\beta$ such that
\begin{equation*}
\mu^{k}_{\infty,\beta} = \int_{\caM_\beta}\nu^{\otimes k}\diff P_\beta(\nu), \quad \text{for all} \ k\in\bbN.
\end{equation*}
Finally, $\mu^{k}_{\infty,\beta}$ is an absolutely continuous measure on $\Omega^k$ for all $k\in\bbN$. 
In particular, $P_\beta$ is supported on absolutely continuous measures.
\end{thm}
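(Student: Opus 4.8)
The plan is to follow the classical Messer--Spohn strategy, split into an upper bound, a lower bound, and a structural part describing the accumulation points via a quantum-de-Finetti-type (Hewitt--Savage) theorem.

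\emph{Upper bound.} Since $F_{N,\beta}=\caF_{N,\beta}(\mu_{N,\beta})\le\caF_{N,\beta}(\nu^{\otimes N})$ for every probability measure $\nu$ on $\Omega$, formula~\eqref{product} gives $N^{-1}F_{N,\beta}\le\mathfrak{f}_\beta(\nu)$ for all $\nu$, whence $\limsup_{N\to\infty}N^{-1}F_{N,\beta}\le f_\beta$ immediately. The only point to check here is that $\caM_\beta$ is nonempty, i.e.\ that $\mathfrak{f}_\beta$ actually attains its infimum; this follows from weak-* compactness of $\caE(\Omega)$, weak-* lower semicontinuity of $\nu\mapsto -\caS_1(\nu)$, and continuity (using boundedness of $J$ and $v\ge0$) of the energy term $\nu\mapsto\frac12\int_{\Omega^2}v(X,Y)\diff\nu(X)\diff\nu(Y)$.

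\emph{Lower bound.} Here I would use the subadditivity/Gibbs-variational structure together with an entropy estimate in the spirit of Robinson--Ruelle: decompose $\Omega^N$ into product cells, use the fact that the entropy of a measure on a product space is bounded above by the sum of its marginal entropies (Gibbs' inequality), and compare the energy of $\mu_{N,\beta}$ with that of the product of its one-particle marginals. Concretely, using~\eqref{energy} and~\eqref{product}, one writes $N^{-1}F_{N,\beta}=\tfrac12\int_{\Omega^2}v\diff\mu_{N,\beta}^2-\beta^{-1}N^{-1}\caS_N(\mu_{N,\beta})$, bounds $N^{-1}\caS_N(\mu_{N,\beta})\le\caS_1(\mu_{N,\beta}^1)$, and then needs to show that, along a weak-* convergent subsequence $\mu_{N,\beta}^2\to\mu_{\infty,\beta}^2$, the liminf of the right-hand side is at least $f_\beta$. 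The lower semicontinuity of the energy term requires care because $v$ is only lower semicontinuous and possibly unbounded above: one approximates $v$ from below by bounded continuous functions $v_R:=\min(v,R)\wedge$(continuous mollification), passes to the limit, and lets $R\to\infty$ by monotone convergence. The entropy term is lower semicontinuous under weak-* convergence. This reduces the problem to showing $\tfrac12\int v\diff\mu_{\infty,\beta}^2-\beta^{-1}\caS_1(\mu_{\infty,\beta}^1)\ge f_\beta$, which follows once the de Finetti structure below is in place, since then $\mu_{\infty,\beta}^2=\int\nu^{\otimes2}\diff P_\beta$ and $\mu_{\infty,\beta}^1=\int\nu\diff P_\beta$, so by affinity of the energy in $\mu^2$, concavity of $\caS_1$, and Jensen one gets $\ge\int\mathfrak{f}_\beta(\nu)\diff P_\beta(\nu)\ge f_\beta$.

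\emph{Structure of accumulation points.} By permutation symmetry of $\mu_{N,\beta}$, the sequence $(\mu_{N,\beta})$ is exchangeable; the Hewitt--Savage theorem (in the form used by Messer--Spohn, or via Diaconis--Freedman finite-$N$ estimates) gives that any weak-* accumulation point $\mu_{\infty,\beta}$ is of the stated form $\mu_{\infty,\beta}^k=\int_{\caE(\Omega)}\nu^{\otimes k}\diff P_\beta(\nu)$ for some probability measure $P_\beta$ on $\caE(\Omega)$. Combining the upper and lower bounds then forces $\int\mathfrak{f}_\beta(\nu)\diff P_\beta(\nu)=f_\beta$, so that $P_\beta$ is supported on $\caM_\beta$. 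Finally, absolute continuity of $\mu_{\infty,\beta}^k$: I would obtain it from a uniform-integrability/entropy bound. The densities $\rho_{N,\beta}^k$ satisfy a uniform bound on $\int\rho_{N,\beta}^k\ln\rho_{N,\beta}^k$ (from $v\ge0$ one gets $Z_N(\beta)\le|\Omega|^N$, hence $N^{-1}\caS_N(\mu_{N,\beta})\ge -\beta N^{-1}\caE_N+\text{const}$, and the energy is $O(1)$ per particle), and this entropy bound is inherited by the $k$-marginals and is weak-* lower semicontinuous, so the limit has finite relative entropy with respect to the (normalised) Lebesgue measure on $\Omega^k$ and is therefore absolutely continuous. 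Since $\nu^{\otimes k}$ absolutely continuous for all $k$ forces $\nu$ itself absolutely continuous, $P_\beta$ is supported on absolutely continuous measures.

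\emph{Main obstacle.} The delicate point is the lower semicontinuity of the energy functional under weak-* convergence of the two-particle marginals when $v$ is merely lower semicontinuous and unbounded above: one must combine a monotone approximation $v_R\uparrow v$ with the diagonal singularity of $v$, making sure the approximation is continuous (so that weak-* convergence applies) and that the monotone limit can be exchanged with the $N\to\infty$ limit. The repulsive-singularity generalisation beyond \cite{Messer:1982aa} lives entirely in this step.
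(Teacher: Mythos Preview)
Your upper bound and the overall Messer--Spohn/Hewitt--Savage architecture are fine, and your handling of the energy lower semicontinuity via monotone cut-offs is essentially what the paper does (they take $v^\lambda=\min\{v,\lambda\}$ and apply Fatou). But there is a genuine gap in your entropy lower bound, and it is precisely the step you treat as routine.

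You bound $N^{-1}\caS_N(\mu_{N,\beta})\le\caS_1(\mu_{N,\beta}^1)$, pass to the limit, and then claim that
\[
\tfrac12\int v\,\dif\mu_{\infty,\beta}^2-\beta^{-1}\caS_1(\mu_{\infty,\beta}^1)\ \ge\ \int\mathfrak{f}_\beta(\nu)\,\dif P_\beta(\nu)
\]
follows from ``affinity of the energy in $\mu^2$, concavity of $\caS_1$, and Jensen''. The energy part is indeed affine and decomposes correctly. But for the entropy you need $-\caS_1(\mu_{\infty,\beta}^1)\ge -\int\caS_1(\nu)\,\dif P_\beta(\nu)$, i.e.\ $\caS_1\bigl(\int\nu\,\dif P_\beta\bigr)\le\int\caS_1(\nu)\,\dif P_\beta$. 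Concavity of $\caS_1$ gives exactly the \emph{opposite} inequality, so Jensen points the wrong way and your lower bound does not close. Reducing to the one-particle entropy $\caS_1(\mu_{\infty,\beta}^1)$ throws away the correlations encoded in the higher marginals and cannot recover $\int\caS_1(\nu)\,\dif P_\beta$ from below.

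The paper avoids this by never collapsing to $\caS_1$ of the first marginal. Instead it uses subadditivity in the form~\eqref{subadditivity} to get, for every fixed $k$,
\[
\liminf_{j}\bigl(-N_j^{-1}\caS_{N_j}(\mu_{N_j,\beta})\bigr)\ \ge\ -k^{-1}\caS_k(\mu_{\infty,\beta}^k),
\]
and then shows that the \emph{mean entropy} $\mu_\infty\mapsto\liminf_k\bigl(-k^{-1}\caS_k(\mu_\infty^k)\bigr)$ is \emph{affine} (both inequalities hold, up to an $O(k^{-1}\ln 2)$ correction that vanishes in the limit). Affinity is what allows the Hewitt--Savage decomposition to pass inside without loss, yielding $\liminf_k(-k^{-1}\caS_k(\mu_{\infty,\beta}^k))=-\int\caS_1(\nu)\,\dif P_\beta(\nu)$ as an equality. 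This is the Robinson--Ruelle ingredient you allude to, but it must be applied at the level of the $k$-particle entropies, not the one-particle entropy.

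A secondary remark: for absolute continuity of $\mu_{\infty,\beta}^k$ the paper does not use an entropy/uniform-integrability argument. It obtains a direct uniform $L^\infty$ bound on $\rho_{N,\beta}^k$ by splitting $V_N$ into interior/cross/exterior pieces, bounding the first two by $1$ in the numerator, and then applying Jensen's inequality to lower-bound $Z_N(\beta)$ in terms of $Z_N^{\mathrm{(ext)}}(\beta)$; Assumption~\ref{A:VBasic}(c) (integrability of $v$ via $J$) is used here. Your entropy-bound route may be salvageable, but note that the sketch you give (``$N^{-1}\caS_N\ge -\beta N^{-1}\caE_N+\text{const}$'') again requires controlling the energy per particle uniformly, which for unbounded $v$ is not automatic without something like the $J$-bound.
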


Let us recall here that 
an accumulation point of $\{\mu_{N,\beta}\}_{N\in\bbN}$ in the weak-* topology is a probability measure $\mu$
on $\Omega$ such that there exists a subsequence of indices $\{N_j\}_{j\in\nat}$ for which
$\int_\Omega f \diff\mu_{N_j} \to \int_\Omega f \diff\mu$ as $j\to\infty$, for all continuous
functions $f:\Omega\to\real$. We shall write $\mu_{N_j}\wto\mu$ for this notion of convergence,
also known as `weak convergence of measures'.

Note that the set of minimisers is not empty: minus the entropy is a 
lower semi-continuous function on the set of probability measures 
over $\Omega$ equipped with the weak-* topology. 
The same holds for the energy if the potential is lower semi-continuous, 
see~(\ref{e:lsc}) below. Therefore, the free energy is a lower 
semi-continuous function defined on a compact set, 
namely the set of probability measures over $\Omega$, and hence it reaches its infimum.

In the next proposition, we exhibit the Euler--Lagrange equation associated with the functional $\mathfrak{f}_\beta$, 
which is solved in particular by its minimisers.
The counterpart of this equation in~\cite{Messer:1982aa} is referred to as the Lane--Emden equation.
\begin{prop}\label{prop:Gap Eq}
The absolutely continuous critical points of the mean-field functional $\mathfrak{f}_\beta$  
have a density $\nu(x,m)$ satisfying
\begin{equation}\label{GapEq}
\nu(x,m) = \frac{\e^{-\beta H_\nu(x,m)}}{\int_\Omega \e^{-\beta H_\nu(x',m')}\diff x'\diff m'}, 
\quad\text{a.e.} \ (x,m)\in\Omega,
\end{equation}
where
\begin{equation*}\label{H_nu}
H_\nu(x,m) := \int_\Omega v(x,x';m,m') \nu(x',m') \diff x'\diff m'.
\end{equation*}
\end{prop}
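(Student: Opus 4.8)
The plan is to compute the first variation of $\mathfrak{f}_\beta$ along admissible perturbations within the set of probability densities on $\Omega$, and to read off the Euler--Lagrange equation from the vanishing of this variation at a critical point. Concretely, let $\nu$ be an absolutely continuous critical point, viewed as a density on $\Omega$ with $\nu\ge 0$ and $\int_\Omega \nu = 1$. For a bounded measurable function $\eta$ on $\Omega$ with $\int_\Omega \eta \diff x\diff m = 0$, the one-parameter family $\nu_t := \nu + t\eta$ is, for $|t|$ small enough (depending on $\eta$ through a lower bound on $\nu$ on the support of $\eta$, so one should first restrict to $\eta$ supported where $\nu\ge\delta>0$), again a probability density. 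I would then differentiate $t\mapsto\mathfrak{f}_\beta(\nu_t)$ at $t=0$.

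The energy term $\tfrac12\int_{\Omega^2} v(X,Y)\nu(X)\nu(Y)\diff X\diff Y$ is quadratic in $\nu$, so by the symmetry of $v$ its derivative at $t=0$ equals $\int_\Omega H_\nu(X)\,\eta(X)\diff X$, where $H_\nu$ is as defined in the statement; here Assumption~\ref{A:VBasic}(c) guarantees that $H_\nu$ is a well-defined bounded function (since $\nu\in L^1$ and, after the reduction $v\ge0$, the kernel has bounded $J$-norm), which makes the differentiation under the integral sign legitimate. For the entropy term $-\beta^{-1}\caS_1(\nu_t)=\beta^{-1}\int_\Omega \nu_t\ln\nu_t$, differentiating gives $\beta^{-1}\int_\Omega(\ln\nu + 1)\eta\diff X = \beta^{-1}\int_\Omega \ln\nu\,\eta\diff X$, using $\int\eta=0$ to kill the constant; the standard care is needed to justify passing the derivative inside, using that $x\mapsto x\ln x$ has bounded derivative away from $0$ and that we work on the region $\nu\ge\delta$. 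Setting the total derivative to zero yields $\int_\Omega\bigl(H_\nu(X) + \beta^{-1}\ln\nu(X)\bigr)\eta(X)\diff X = 0$ for all such $\eta$ with zero mean, hence $H_\nu + \beta^{-1}\ln\nu$ equals a constant $c$ a.e.\ on $\{\nu\ge\delta\}$; letting $\delta\to0$ and using that $\{\nu>0\}$ is exhausted by such sets, we get $\nu = \e^{\beta c}\,\e^{-\beta H_\nu}$ a.e.\ on $\{\nu>0\}$. Finally, normalisation $\int_\Omega\nu=1$ fixes $\e^{\beta c} = \bigl(\int_\Omega \e^{-\beta H_\nu(X')}\diff X'\bigr)^{-1}$, which is exactly~(\ref{GapEq}); note the right-hand side is strictly positive, so in fact $\nu>0$ a.e.\ and the a.e.\ identity holds on all of $\Omega$.

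The main obstacle, and the point deserving genuine care rather than a one-line dismissal, is the justification that variations must be restricted to the set $\{\nu\ge\delta\}$ and then that this restriction still captures the full Euler--Lagrange relation: one cannot freely perturb $\nu$ downward where it is small or zero without leaving the class of probability densities or sending the entropy integrand's derivative out of control. The clean way around this is precisely the two-step argument above --- first derive the constancy of $H_\nu+\beta^{-1}\ln\nu$ on each $\{\nu\ge\delta\}$ (the constant a priori possibly depending on $\delta$, but consistency across nested sets forces it to be the same $c$), then exponentiate and normalise. A secondary technical point is the differentiation under the integral sign for the entropy near $\nu=0$, handled by the $\{\nu\ge\delta\}$ truncation together with dominated convergence; and for the energy, by the boundedness of $H_\nu$ from Assumption~\ref{A:VBasic}(c). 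I do not expect any difficulty beyond these.
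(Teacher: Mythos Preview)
Your proposal is correct and follows essentially the same route as the paper: both compute the first variation of $\mathfrak{f}_\beta$ under the normalisation constraint, arrive at $H_\nu + \beta^{-1}\ln\nu = \text{const}$ a.e., then exponentiate and fix the constant by normalisation. The paper simply invokes the Lagrange multiplier theorem on the cone $L^1(\Omega)_+$ of a.e.\ positive densities (thus assuming positivity from the outset rather than recovering it), whereas you carry out the same computation by hand with more care about the region $\{\nu=0\}$ via the $\{\nu\ge\delta\}$ truncation; this extra care is not wrong, just more than the paper itself provides.
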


In the rest of the paper, we study the structure of the set of solutions to \eqref{GapEq}, in the
case when the system is {\em spatially homogeneous}. That is, instead of \eqref{GapEq}, 
we shall henceforth consider the equation
\begin{equation}\label{gap_hom}
\nu(m) = \frac{\e^{-\beta  H_{\nu}(m)}}
{\int_M \e^{-\beta H_{\nu}(m)} \diff m},
\qquad \text{with} \quad  H_{\nu}(m) = \int_M U(m,m')\nu(m')\diff m',
\end{equation}
where we consider a general two-body angular interaction $U\in L^\infty(M\times M)$.
We will refer to \eqref{gap_hom} as the \emph{self-consistency equation} (SCE),
following a common terminology in physics, for instance in the liquid crystal literature.
Noteworthily, steady-state solutions of the Smoluchowski equation, 
which arises in the kinetic theory of liquid crystals, are also solutions of the SCE, 
see \cite{const}.
\begin{rem}
\rm
Here, we implement the homogeneity assumption as a restriction on the set of solutions, yielding a simplified SCE. This is justified from a physical point of view in the limit of very long rods as was observed for example in~\cite{Onsager:1949aa,Lee:1986vi}. Also, in the very particular case of a two-body potential of the form
\begin{equation*}
v(x,x';m,m') = w(x,x') + U(m,m'),
\end{equation*}
the numerator in the right-hand side of \eqref{GapEq} factorises as
\begin{equation*}
\exp(-\beta H_\nu(x,m)) =
\exp\left(-\beta\int_\Lambda w(x,x') \nu_M(x')\diff x'\right)
\exp\left(-\beta\int_M U(m,m')\nu_\Lambda(m')\diff m'\right),
\end{equation*}
where $\nu_M, \nu_\Lambda$ are the marginals of $\nu$, i.e.
\begin{equation*}
\nu_M(x)=\int_M \nu(x,m) \diff m, \qquad \nu_\Lambda(m)=\int_\Lambda \nu(x,m) \diff x.
\end{equation*}
Hence, integrating \eqref{GapEq} over $\Lambda$ and dropping the index $\Lambda$ of $\nu_\Lambda$,
one gets exactly \eqref{gap_hom}.
\end{rem}


\subsection{Phase transitions}

We will now discuss (for spatially homogeneous systems) the phase diagram 
in the mean-field theory, by studying solutions of the SCE \eqref{gap_hom}. 
We shall call {\em solution} of \eqref{gap_hom} a pair $(\beta,\nu)\in(0,\infty)\times L^1(M)$ satisfying 
\eqref{gap_hom}. Observe that $\nu$ is then {\em automatically} positive and normalised,
hence the density of a probability measure on $M$. Note, furthermore, that any solution 
$(\beta,\nu)\in(0,\infty)\times L^1(M)$ of \eqref{gap_hom} necessarily has $\nu\in L^2(M)$. 
Therefore, we do not loose generality by considering solutions in 
$(0,\infty)\times L^2(M)$, which presents several technical advantages.

An important preliminary remark is that, under Assumption~\ref{const_mean.as} below, 
the isotropic state $(\beta,\nu\equiv|M|^{-1})$ is a solution of \eqref{gap_hom} for all $\beta>0$.
We first show that, for $\beta$ sufficiently small (high temperature),
this is in fact the only solution of \eqref{gap_hom}. 
We then exhibit a transition temperature $\beta_\star$, where the isotropic state undergoes
a transcritical bifurcation.
The last part of our analysis will focus on the special case of Maier--Saupe interactions.
In this context we can compute explicitly the transition temperature $\beta_\star$. 

We finally study the zero-temperature limit by means of the standard order parameter $\langle\sin^2(\theta)\rangle$,
where $\theta$ is the angle between a molecule and the director, and $\langle\cdot\rangle$
denotes statistical average. For this observable, we exhibit a differentiable branch converging to zero 
as $\beta\to\infty$, showing that the system freezes in a perfectly aligned state at zero temperature.

\medskip
Our first general results about the solutions of \eqref{gap_hom} will be formulated under
\begin{A}\label{const_mean.as}
\rm
$\int_M U(m,m')\diff m'$ is independent of $m\in M$.
\end{A}
Besides being mathematically necessary, this assumption is physically clear: 
the effective potential $H_{\vert M\vert^{-1}}$ is constant for the uniform distribution. 
In other words, a completely disordered system does not generate any force towards alignment.


\subsubsection{High temperature and transcritical bifurcation}
 
We will show the existence of a transition temperature $\beta_\star$, and of a branch of 
non-isotropic solutions of \eqref{gap_hom} crossing the {\em isotropic line} 
$$
\{(\beta,|M|^{-1}): \beta>0\}\subset (0,\infty)\times L^2(M)
$$
at the point $(\beta_\star,|M|^{-1})$.
To make the bifurcation analysis more transparent,
we shall conveniently reformulate \eqref{gap_hom} as an operator
equation by introducing the mapping 
$\Phi:(0,\infty)\times L^2(M) \to L^2(M)$ defined as
\begin{equation}\label{Phi}
\Phi(\beta,\nu)=\nu-\frac{\e^{-\beta H_\nu}}{\int_{M} \e^{-\beta H_\nu}},
\end{equation}
so that \eqref{gap_hom} becomes
\begin{equation}\label{operator_gap}
 \Phi(\beta,\nu)=0.
\end{equation}

The linearisation of \eqref{operator_gap} will play a central role in the analysis.
We will establish in Section~\ref{Proofs} that
\begin{equation*}
D_\nu \Phi(\beta,|{M}|^{-1})\mu = \mu-\beta K\mu,
\end{equation*}
where $D_\nu \Phi$ denotes the Fr\'echet derivative of $\Phi$ with respect to $\nu$, and 
$K:L^2({M})\to L^2({M})$ is the compact self-adjoint operator defined by
\begin{equation}\label{K}
(K\mu)(m):=-|{M}|^{-1}\int_M \vt(m,m')\mu(m')\diff m', \quad \mu\in L^2({M}).
\end{equation}
Here, recalling that $\int_M U(m,m'')\diff m''$ is constant by Assumption~\ref{const_mean.as},
we have defined 
\begin{equation}\label{dev}
\vt(m,m'):=U(m,m')- |M|^{-1}\int_M U(m,m'')\diff m''.
\end{equation} 
The linearisation of \eqref{operator_gap} is thus given by
\begin{equation}\label{linearisation}
\mu=\beta K\mu.
\end{equation}

The following proposition establishes the absence of phase transitions at high temperature, under the running assumption of spatial homogeneity.

\begin{prop}\label{high_temp.prop}
If Assumption~\ref{const_mean.as} holds and $\beta<\norm{K}^{-1}$, then there is no bifurcation points for \eqref{operator_gap}. 
\end{prop}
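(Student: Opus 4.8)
The claim is that if $\beta<\|K\|^{-1}$, then $(\beta,|M|^{-1})$ is not a bifurcation point of \eqref{operator_gap}; since the isotropic line is the only branch a priori available to bifurcate from here, and bifurcation points must in any case be points where the linearisation $D_\nu\Phi(\beta,\nu)$ fails to be invertible, the natural strategy is to show that $D_\nu\Phi(\beta,|M|^{-1})$ is a linear isomorphism of $L^2(M)$ whenever $\beta\|K\|<1$, and then invoke the implicit function theorem. First I would record, from the formula stated in the excerpt, that
\begin{equation*}
D_\nu\Phi(\beta,|M|^{-1})=I-\beta K,
\end{equation*}
where $K$ is compact and self-adjoint on $L^2(M)$ by \eqref{K}. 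Because $K$ is self-adjoint, its spectral radius equals $\|K\|$, so for $\beta<\|K\|^{-1}$ we have $\beta\|K\|<1$, hence $1\notin\sigma(\beta K)$ and $I-\beta K$ is boundedly invertible (indeed $(I-\beta K)^{-1}=\sum_{n\ge0}\beta^nK^n$ converges in operator norm).

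Next I would check that $\Phi$ is $C^1$ (in fact smooth) from a neighbourhood of $(\beta,|M|^{-1})$ in $(0,\infty)\times L^2(M)$ into $L^2(M)$: the map $\nu\mapsto H_\nu$ is bounded linear since $U\in L^\infty(M\times M)$, the exponential $v\mapsto e^{-\beta v}$ is smooth as a Nemytskii operator on bounded sets (here $H_\nu$ stays bounded in $L^\infty$ for $\nu$ near the isotropic state, using $\|U\|_\infty$), and the normalising denominator $\int_M e^{-\beta H_\nu}$ is bounded away from zero nearby, so division is smooth. This justifies the stated formula for $D_\nu\Phi$ and makes the implicit function theorem applicable. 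By the IFT, since $D_\nu\Phi(\beta,|M|^{-1})$ is an isomorphism, there is a neighbourhood $(\beta-\delta,\beta+\delta)\times B_\epsilon(|M|^{-1})$ in which the only solutions of $\Phi(\tilde\beta,\nu)=0$ are $\nu=\nu(\tilde\beta)$ for a unique $C^1$ curve; but $\nu\equiv|M|^{-1}$ already is such a solution for every $\tilde\beta$ by Assumption~\ref{const_mean.as}, so by uniqueness the branch is exactly the isotropic line, and no other branch meets it at $(\beta,|M|^{-1})$. Hence $(\beta,|M|^{-1})$ is not a bifurcation point.

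The main obstacle I anticipate is not the functional-analytic core — that $I-\beta K$ is invertible when $\beta\|K\|<1$ is immediate once self-adjointness is in hand — but rather the regularity bookkeeping: verifying carefully that $\Phi$ is $C^1$ near the isotropic state and that its $\nu$-derivative is genuinely the claimed operator $I-\beta K$ as a bounded operator on $L^2$. The subtlety is that differentiation of $\nu\mapsto e^{-\beta H_\nu}/\!\int e^{-\beta H_\nu}$ passes through $L^\infty$-type estimates on $H_\nu$, and one must confirm that the resulting Fréchet derivative lands back in (and is continuous into) the space of bounded operators on $L^2(M)$; the boundedness of $U$ and the compactness of $M$ are exactly what make this work. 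A secondary point to state cleanly is why bifurcation can only occur from the isotropic line at all — i.e. that under Assumption~\ref{const_mean.as} the isotropic state is a solution for every $\beta$, so it is the relevant trivial branch — and that is already observed in the text preceding the proposition. With these pieces the proof is short: $C^1$-ness plus invertibility of the linearisation plus the implicit function theorem.
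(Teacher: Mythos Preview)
Your proposal is correct and follows essentially the same route as the paper: show that $D_\nu\Phi(\beta,|M|^{-1})=I-\beta K$ is an isomorphism of $L^2(M)$ whenever $\beta\|K\|<1$, then apply the implicit function theorem to conclude that the only solution branch through $(\beta,|M|^{-1})$ is the isotropic line itself. The only cosmetic difference is in how invertibility is established --- you invoke the Neumann series $(I-\beta K)^{-1}=\sum_{n\ge0}\beta^nK^n$, while the paper phrases it as ``$\mu\mapsto\beta K\mu$ is a contraction, hence $\ker(I-\beta K)=\{0\}$, hence by the Fredholm alternative $I-\beta K$ is an isomorphism''; both are immediate from $\beta\|K\|<1$, and your version is if anything slightly more direct since it does not require compactness. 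Your extra care about the $C^1$-regularity of $\Phi$ is handled in the paper by a separate lemma asserting that $\Phi$ is smooth.
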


\begin{rem}\label{uniqueness.rem}
\rm
Note that Proposition~\ref{high_temp.prop} does not preclude the existence of 
non-isotropic states `far away' from the isotropic line; it only ensures that such solutions cannot
emerge from, or cross the isotropic line at points $(\beta,|M|^{-1})$ with $\beta<\norm{K}^{-1}$.
However, following the proof
of \cite[Theorem~3]{Messer:1982aa}, one can show that \eqref{operator_gap} indeed has
a unique solution (the isotropic state) for all $\beta<(2\|U\|_{L^\infty})^{-1}$. 
In the case of Maier--Saupe interactions 
discussed below, this is consistent with the picture obtained in Section~3.1 of \cite{slast}.
\end{rem}

We now formulate a general result yielding a local continuous curve of solutions of  
\eqref{operator_gap} bifurcating from an isotropic solution $(\beta_\star,|{M}|^{-1})$,
where the transition temperature $\beta_\star>0$ is characterised in terms
of the linear problem \eqref{linearisation}.

\begin{thm}\label{transition.thm}
Suppose that Assumption~\ref{const_mean.as} holds and that there exists $\beta=\beta_\star>0$ for which
\eqref{linearisation} has a one-dimensional solution space, spanned by some $\mu_\star\in L^2(M)\sm\{0\}$.
Then there exist $\ep>0$ and a continuous map 
$$
(-\ep,\ep)\ni s\mapsto (\beta(s),\nu(s))\in(0,\infty)\times L^2(M)
$$
such that
$(\beta(s),\nu(s))$ is a non-isotropic solution of \eqref{operator_gap} for $s\neq0$, with
$$
\beta(0)=\beta_\star \qquad\text{and}\qquad \nu(0)=|M|^{-1}.
$$
Furthermore, $\nu(s)=|M|^{-1}+s(\mu_\star+\rho(s))$, where $\rho$ 
is a continuous function of $s\in(-\ep,\ep)$ such that:
$$
\rho(0)=0 \qquad\text{and}\qquad \rho(s)\in[\mu_\star]^\perp, \quad\text{for all} \ s\in(-\ep,\ep).
$$
\end{thm}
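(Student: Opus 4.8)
The plan is to apply the Crandall--Rabinowitz bifurcation theorem (bifurcation from a simple eigenvalue). First I would verify that $\Phi$ in \eqref{Phi} is well-defined and sufficiently smooth (say $C^2$) from $(0,\infty)\times L^2(M)$ into $L^2(M)$: since $U\in L^\infty(M\times M)$, the map $\nu\mapsto H_\nu$ is bounded linear from $L^2(M)$ to $L^\infty(M)$, and $t\mapsto \e^{-\beta t}$ composed with normalisation is smooth on bounded sets, so the required regularity and the formula for $D_\nu\Phi(\beta,|M|^{-1})$ stated in the excerpt follow by direct differentiation under the integral sign. Next, Assumption~\ref{const_mean.as} guarantees $\Phi(\beta,|M|^{-1})=0$ for all $\beta$, so the isotropic line is a line of trivial solutions. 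The hypothesis that \eqref{linearisation} has a one-dimensional solution space at $\beta=\beta_\star$ spanned by $\mu_\star$ says exactly that $D_\nu\Phi(\beta_\star,|M|^{-1}) = I-\beta_\star K$ has a one-dimensional kernel $[\mu_\star]$; since $K$ is compact and self-adjoint, $I-\beta_\star K$ is Fredholm of index zero and its range is $[\mu_\star]^\perp$, so the codimension of the range is also one.

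The remaining Crandall--Rabinowitz condition is the transversality (non-degeneracy) hypothesis: $\partial_\beta D_\nu\Phi(\beta_\star,|M|^{-1})\mu_\star \notin \rge\big(D_\nu\Phi(\beta_\star,|M|^{-1})\big) = [\mu_\star]^\perp$. Here $\partial_\beta D_\nu\Phi(\beta,|M|^{-1})\mu = -K\mu$, so the condition reads $\langle K\mu_\star,\mu_\star\rangle \neq 0$, i.e.\ $\langle \mu_\star, \beta_\star^{-1}\mu_\star\rangle \neq 0$ using $K\mu_\star = \beta_\star^{-1}\mu_\star$; this is just $\beta_\star^{-1}\|\mu_\star\|_{L^2}^2 > 0$, which holds automatically. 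Hence all hypotheses of the Crandall--Rabinowitz theorem are met, and it produces $\ep>0$ and a continuous curve $s\mapsto(\beta(s),\nu(s))$ with $\beta(0)=\beta_\star$, $\nu(0)=|M|^{-1}$, and $\nu(s) = |M|^{-1} + s\mu_\star + s\rho(s)$ with $\rho$ continuous, $\rho(0)=0$, and $\rho(s)\in[\mu_\star]^\perp$ — precisely the claimed form, after noting that the theorem furnishes the curve in a complement of $[\mu_\star]$, which we take to be $[\mu_\star]^\perp$.

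It then remains to check that $(\beta(s),\nu(s))$ is genuinely \emph{non-isotropic} for $s\neq 0$, i.e.\ $\nu(s)\not\equiv|M|^{-1}$. This is immediate from the expansion: for small $s\neq 0$, $\nu(s)-|M|^{-1} = s(\mu_\star+\rho(s))$ and $\|\mu_\star+\rho(s)\|_{L^2}\to\|\mu_\star\|_{L^2}>0$ as $s\to 0$, so after shrinking $\ep$ we have $\nu(s)\neq|M|^{-1}$ for all $0<|s|<\ep$; the curve is contained in the trivial line only at $s=0$. Finally, although Crandall--Rabinowitz is usually stated for a parameter-independent target and a single real parameter, the present $\Phi$ is exactly of that form, so no modification is needed. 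The only genuine point requiring care — and the one I would expect to be the main obstacle — is the smoothness and differentiation-under-the-integral justification for $\Phi$ and its derivatives near the isotropic line; everything else reduces to the structural facts about $K$ supplied by Assumption~\ref{const_mean.as} and compact self-adjointness.
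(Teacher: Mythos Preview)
Your proposal is correct and follows essentially the same route as the paper: both apply the Crandall--Rabinowitz theorem after checking that $A_{\beta_\star}=I-\beta_\star K$ is Fredholm of index zero with one-dimensional kernel, and that the transversality condition $D^2_{\beta\nu}\Phi(\beta_\star,|M|^{-1})\mu_\star=-K\mu_\star\notin\rge A_{\beta_\star}$ holds. Your verification of transversality via $\langle K\mu_\star,\mu_\star\rangle=\beta_\star^{-1}\|\mu_\star\|_{L^2}^2>0$ is slightly more direct than the paper's Fredholm-alternative contradiction argument, but the content is identical.
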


Here, $[\mu_\star]^\perp$ denotes the orthogonal complement of $\mu_\star$ in $L^2(M)$.
Since $\int_M\vt(m,m')\diff m'\equiv0$, a non-zero solution $\mu_\star$ of \eqref{linearisation}
cannot be constant. Given the form of $\nu(s)$, this indeed 
ensures that each $\nu(s)$ is non-isotropic for $s\neq0$.

Obviously, finding the transition temperature $\beta_\star$ in Theorem~\ref{transition.thm}
(if it exists) is very much dependent
upon the specific choice of potential appearing in \eqref{K}. We conclude this section
with an explicit result in the case of {\em Maier--Saupe interactions}, i.e.~with the potential
\begin{equation}\label{MSpotential}
U(m,m')=w[1-P_2(\cos\gamma)],
\end{equation}
where $w>0$ is a coupling constant, and
$P_2$ is the second Legendre polynomial, $P_2(x)=(3x^2-1)/2$, and $\gamma = \measuredangle(m,m')$.
Now, a straightforward calculation shows that $U(m,m')$ satisfies
\begin{equation}\label{constancy}
\int_M U(m,m')\diff m'=2\pi\ow, \quad \text{for all} \ m\in M,
\end{equation}
so that Assumption~\ref{const_mean.as} is satisfied. Therefore, the existence of a
bifurcating branch follows from Theorem~\ref{transition.thm}, provided one can solve
the linearised equation \eqref{linearisation}.

\begin{prop}\label{kernel.prop}
Consider the Maier--Saupe potential~(\ref{MSpotential}). 
There exists a unique $\beta=\beta_\star(\ow)>0$ for which
\eqref{linearisation} has a one-dimensional solution space.
The value of $\beta_\star$ and a corresponding eigenvector $\mu_\star$ are explicitly given by
\begin{equation}\label{eigen}
\beta_\star=5/\ow, \qquad
\mu_\star(\theta,\varphi)= \mu_\star(\theta)=3 \cos^2\theta-1.
\end{equation}
Furthermore, the bifurcation at the point $(\beta_\star,|M|^{-1})$ given by  
Theorem~\ref{transition.thm} is transcritical,
in the sense that there are solutions $(\beta(s),\nu(s))$ bifurcating from 
$(\beta_\star,|M|^{-1})$ both with $\beta(s)<\beta_\star$ and with $\beta(s)>\beta_\star$.
\end{prop}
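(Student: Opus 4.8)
The plan is to reduce the linear problem \eqref{linearisation} for the Maier--Saupe potential to the spectral theory of the orthogonal projection onto the degree-two spherical harmonics --- which makes $\beta_\star$ and $\mu_\star$ transparent --- and then to read off the type of the bifurcation produced by Theorem~\ref{transition.thm} from a Lyapunov--Schmidt reduction whose sign is governed by a single cubic moment of $\mu_\star$. First, combining \eqref{dev} with \eqref{constancy} gives $\vt(m,m')=U(m,m')-w=-w\,P_2(\cos\gamma)$, so by \eqref{K} and $|M|=2\pi$ one has $(K\mu)(m)=\tfrac{w}{2\pi}\int_M P_2(\cos\gamma)\,\mu(m')\diff m'$. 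Identifying $M=\bbR P^2$ with the even functions on $\stwo$, for which $\int_M=\tfrac12\int_{\stwo}$, the addition theorem for spherical harmonics (Funk--Hecke) gives $\int_{\stwo}P_2(\cos\gamma)\,Y_\ell(m')\diff m'=\tfrac{4\pi}{5}\,\delta_{\ell,2}\,Y_\ell(m)$ for any degree-$\ell$ harmonic $Y_\ell$; hence $K=\tfrac{w}{5}\,\Pi$, where $\Pi$ is the orthogonal projection of $L^2(M)$ onto the space $\caH$ of (even) degree-two harmonics. Thus $\mu=\beta K\mu$ forces $\mu\in\caH$ and $\beta=5/w$, and conversely every $\mu\in\caH$ solves \eqref{linearisation} at $\beta=5/w$; in particular $K\mu_\star=\tfrac{w}{5}\mu_\star$ with $\mu_\star(\theta)=3\cos^2\theta-1=2P_2(\cos\theta)$. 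On all of $L^2(M)$ the kernel of $I-\beta_\star K$ equals the five-dimensional space $\caH$; to meet the hypothesis of Theorem~\ref{transition.thm} one works in the invariant subspace of axisymmetric densities (depending on the polar angle $\theta$ only, i.e.\ with the director fixed along the polar axis), which is preserved by $K$ and by $\Phi$ of \eqref{Phi} and on which $\caH$ collapses to the one-dimensional span of $\mu_\star$. This gives the uniqueness of $\beta_\star=5/w$ and the form \eqref{eigen} of $\mu_\star$ (note $\beta_\star=\norm{K}^{-1}$, in agreement with Proposition~\ref{high_temp.prop} and Remark~\ref{uniqueness.rem}), and Theorem~\ref{transition.thm} then supplies the branch $(\beta(s),\nu(s))$ with $\nu(s)=|M|^{-1}+s(\mu_\star+\rho(s))$, $\rho$ continuous, $\rho(0)=0$, $\rho(s)\in[\mu_\star]^\perp$.

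It remains to show the bifurcation is transcritical, for which I would expand the Gibbs map to second order about $\nu=|M|^{-1}$. Writing $\nu=|M|^{-1}+\mu$ with $\int_M\mu=0$, the additive constant $|M|^{-1}\int_M U(m,m'')\diff m''$ cancels in \eqref{operator_gap}, so, with $G\mu(m):=\int_M\vt(m,m')\mu(m')\diff m'=-|M|(K\mu)(m)$ --- a bounded map $L^2(M)\to L^\infty(M)$ satisfying $\int_M G\mu=0$ --- one obtains $\Phi(\beta,|M|^{-1}+\mu)=(\mu-\beta K\mu)+Q_\beta(\mu)+O(\norm{\mu}_{L^2}^3)$, where $Q_\beta(\mu)=-\tfrac{\beta^2}{2|M|}\big[(G\mu)^2-|M|^{-1}\int_M(G\mu)^2\big]$.

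Next I would substitute $\nu(s)=|M|^{-1}+s\mu_\star+\psi(s)$, with $\psi(s):=s\rho(s)\in[\mu_\star]^\perp$ and $\psi(s)=o(s)$, into $\Phi=0$ and pair with $\mu_\star$. The linear part contributes $s\,(1-\beta(s)/\beta_\star)\norm{\mu_\star}_{L^2}^2$ (using self-adjointness of $K$, $K\mu_\star=\beta_\star^{-1}\mu_\star$, and $\int_M\mu_\star=0$), while, since $G\mu_\star=-|M|\beta_\star^{-1}\mu_\star$, the quadratic part contributes $-\tfrac{|M|}{2}s^2\int_M\mu_\star^3+o(s^2)$. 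Dividing by $s\neq0$ and letting $s\to0$ gives $\beta(s)=\beta_\star-c\,s+o(s)$ with $c=\tfrac{\beta_\star|M|}{2\norm{\mu_\star}_{L^2}^2}\int_M\mu_\star^3$. The elementary integral $\int_M(3\cos^2\theta-1)^3\diff m=2\pi\int_0^1(3u^2-1)^3\diff u=\tfrac{32\pi}{35}\neq0$ shows $c\neq0$, hence $\beta(s)-\beta_\star$ changes sign across $s=0$ and the bifurcation is transcritical.

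The routine parts are the two Legendre-type integrals and keeping the normalisations straight (the factor $\tfrac12$ between $\int_M$ and $\int_{\stwo}$, and the eigenvalue $w/5$). The one genuinely delicate point is the bookkeeping of the Lyapunov--Schmidt reduction given only that $\rho$ is continuous with $\rho(0)=0$: one must verify that $\psi(s)=s\rho(s)$ enters the $\mu_\star$-projected equation only at order $o(s^2)$ --- which uses that the $\mu_\star$-component of $\phi-\beta K\phi$ vanishes for $\phi\in[\mu_\star]^\perp$, together with the boundedness $G:L^2(M)\to L^\infty(M)$ --- and that the cubic remainder is $O(|s|^3)$ uniformly for small $s$. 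Once the reduced equation $(1-\beta(s)/\beta_\star)\norm{\mu_\star}_{L^2}^2=\tfrac{|M|}{2}\,s\int_M\mu_\star^3+o(s)$ is established, transcriticality is immediate from $\int_M\mu_\star^3\neq0$.
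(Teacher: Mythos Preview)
Your proposal is correct and follows the same broad route as the paper's proof: diagonalise $K$ via the addition theorem for spherical harmonics to read off $\beta_\star=5/w$ and $\mu_\star=2P_2(\cos\theta)$, then establish transcriticality by checking that the quadratic coefficient in the bifurcation equation is nonzero. Two differences are worth noting. First, you are explicit---and in fact more careful than the paper---about the axisymmetric restriction: the full kernel of $I-\beta_\star K$ on $L^2(M)$ is the five-dimensional space of degree-two harmonics, and one must pass to the $\varphi$-independent subspace (fixing a director) to obtain a simple eigenvalue and invoke Theorem~\ref{transition.thm}; the paper's sentence ``seeking a real solution yields $\mu_{2m}=0$ for $m\neq0$'' does not by itself achieve this, since real combinations of $Y_{2,\pm m}$ exist. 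Second, for transcriticality the paper appeals directly to the criterion $\langle\mu_\star,\, D^2_{\nu\nu}\Phi(\beta_\star,|M|^{-1})[\mu_\star,\mu_\star]\rangle\neq0$ (citing \cite[Remark~2.9]{am}) and evaluates it using the closed formula of Lemma~\ref{derivatives.lem}, whereas you perform the Lyapunov--Schmidt reduction by hand from a second-order Taylor expansion of the Gibbs map. Both computations reduce, via $H_{\mu_\star}\propto\mu_\star$, to the same elementary integral $\int_M\mu_\star^3=\tfrac{32\pi}{35}\neq0$; the paper's route is shorter once Lemma~\ref{derivatives.lem} is in hand, while yours is self-contained and makes the mechanism of transcriticality (the sign change of $\beta(s)-\beta_\star$ across $s=0$) more transparent. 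Your ``delicate point'' about only having $\rho\in C^0$ with $\rho(0)=0$ is handled correctly: $\psi(s)=s\rho(s)=o(s)$ suffices for the cross terms in $Q_\beta$ and the cubic remainder to be $o(s^2)$ after projecting on $\mu_\star$, exactly as you outline.
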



\subsubsection{Low temperature}

Finally, we study the low temperature limit, $\beta\to\infty$, in the context of 
the Maier--Saupe interaction \eqref{MSpotential}, where we now fix $\ow=1$ for simplicity.
As we will see, in this case the effective potential reads
\begin{equation}\label{effpot}
H_\nu(\theta) =1 - \left\langle  P_2(\cos(\cdot))\right\rangle_{\nu}P_2(\cos\theta).
\end{equation}

Now consider the function $g(\theta)=\sin^2(\theta)$ and let
\begin{equation*}
\xi:= \nu(g) = \frac{2}{3}\left(1-\left\langle  P_2(\cos(\cdot))\right\rangle_{\nu}\right).
\end{equation*}
Up to a non-essential constant, this is the standard `order parameter' considered in the physical literature \cite{Maier:1958aa}. Clearly, $\xi\in[0,1]$, where the uniform distribution $\nu_0 = (2\pi)^{-1}$ yields $\xi = \frac{2}{3}$, while the completely aligned --- so-called `prolate' --- state, $\nu = \delta$ implies $\xi = 0$. The values $(\frac{2}{3},1]$ represent states where the rods tend to be orthogonal to a particular direction --- so-called `oblate' states. 
We shall only consider the prolate state in the low temperature limit here.

With the above choice of order parameter $g$, the homogeneous SCE \eqref{gap_hom} reduces to 
\begin{equation}\label{IFTequ}
\xi = F(\beta,\xi):= \frac{\int_0^{\pi/2} \sin^3(\theta)\e^{-\beta (1-(3/2)\xi) (1-P_2(\cos\theta))}\diff \theta}{\int_0^{\pi/2} \sin(\theta)\e^{-\beta (1-(3/2)\xi) (1-P_2(\cos\theta))}\diff \theta}.
\end{equation}
Using a Laplace method argument, we will show that, as $\beta\to\infty$, the right-hand side of the equation converges to $\sin^2(0)$. This corresponds to the physical intuition: as the temperature decreases, the molecules align into the perfect nematic state. In fact, the following theorem shows that this picture remains essentially true at low enough temperatures.
\begin{thm}\label{zeroK.thm}
Let $G:(0,\infty]\times[0,\frac{2}{3})\to\bbR$ be
\begin{equation*}
G(\beta,\xi) :=\begin{cases}
\xi - F(\beta,\xi) & \text{if }\beta<\infty, \\
\xi & \text{if }\beta = \infty. \end{cases}
\end{equation*}
There exists $0<\bar\beta<\infty$, $\bar\delta>0$, and a continuously differentiable
function $\bar\xi: (\bar\beta,\infty)\to[0,\frac{1}{3})$ such that $\lim_{\beta\to\infty}\bar\xi(\beta) = 0$ and
\begin{equation*}
\left\{(\beta,\xi)\in (\bar\beta,\infty]\times[0,\bar\delta) : G(\beta,\xi) = 0\right\} = 
\left\{(\beta,\bar\xi(\beta)) : \beta\in(\bar\beta,\infty] \right\}.
\end{equation*}
Moreover, for any $\bar\xi\in[0,\bar\delta)$, there exists $\nu\in F$ such that $\xi = \nu(\sin^2(\cdot))$.
\end{thm}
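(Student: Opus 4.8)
\textbf{Proof plan for Theorem~\ref{zeroK.thm}.}
The plan is to apply the implicit function theorem to the map $G$ at the point $(\beta,\xi)=(\infty,0)$, after extending $F$ smoothly to $\beta=\infty$ via the substitution $t=1/\beta$. First I would rewrite $F(\beta,\xi)$ using Laplace's method: with the change of variables $u=1-P_2(\cos\theta)=\tfrac32\sin^2\theta$ the integrals in \eqref{IFTequ} become integrals over $u\in[0,\tfrac32]$ against the weight $\e^{-\beta(1-(3/2)\xi)u}$, and since the coefficient $1-(3/2)\xi$ is bounded below by $\tfrac12$ on $[0,\tfrac13)$, standard Watson's-lemma-type asymptotics give $F(\beta,\xi)=\tfrac{2}{3\beta(1-(3/2)\xi)}+O(\beta^{-2})$ as $\beta\to\infty$, uniformly for $\xi$ in a neighbourhood of $0$. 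This shows $\lim_{\beta\to\infty}F(\beta,\xi)=0$, so $G$ as defined is continuous on $(0,\infty]\times[0,\tfrac23)$ and $G(\infty,0)=0$.

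The key step is to promote this to $C^1$ smoothness of $G$ near $(\infty,0)$ and to compute $\partial_\xi G(\infty,0)$. Writing $s=1/\beta$, one checks that $(s,\xi)\mapsto \widetilde F(s,\xi):=F(1/s,\xi)$ (with $\widetilde F(0,\xi):=0$) extends to a $C^1$ function on a neighbourhood of $(0,0)$ in $[0,\infty)\times[0,\tfrac13)$: the ratio of integrals is smooth in $(s,\xi)$ for $s>0$ by differentiation under the integral sign, and the Laplace asymptotics above, applied also to the $\theta$-derivatives, show that $\widetilde F$ and its first partials extend continuously to $s=0$ with $\widetilde F(0,\cdot)\equiv 0$ and $\partial_s\widetilde F(0,\xi)=\tfrac{2}{3(1-(3/2)\xi)}$, $\partial_\xi\widetilde F(0,\xi)=0$. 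Consequently the function $\widetilde G(s,\xi):=\xi-\widetilde F(s,\xi)$ for $s>0$, $\widetilde G(0,\xi):=\xi$, is $C^1$ near $(0,0)$ with $\widetilde G(0,0)=0$ and $\partial_\xi\widetilde G(0,0)=1\neq0$. The implicit function theorem then yields $\bar\delta>0$, $s_0>0$ and a $C^1$ function $s\mapsto\zeta(s)$ on $[0,s_0)$ with $\zeta(0)=0$, $\widetilde G(s,\zeta(s))=0$, and such that $\widetilde G(s,\xi)=0$ with $(s,\xi)\in[0,s_0)\times[0,\bar\delta)$ forces $\xi=\zeta(s)$. Setting $\bar\beta:=1/s_0$ and $\bar\xi(\beta):=\zeta(1/\beta)$ gives a $C^1$ function on $(\bar\beta,\infty)$ with $\bar\xi(\beta)\to0$ as $\beta\to\infty$ and the claimed description of the solution set; after possibly shrinking $\bar\delta$ one also ensures $\bar\xi(\beta)\in[0,\tfrac13)$, which is needed for the Laplace estimates to remain valid along the branch.

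For the final sentence, given $\bar\xi\in[0,\bar\delta)$ one simply exhibits the corresponding density explicitly: set $\beta:=1/\zeta^{-1}(\bar\xi)$ (or $\beta=\infty$ if $\bar\xi=0$), and let $\nu(\theta):=Z^{-1}\e^{-\beta H_\nu(\theta)}$ be the distribution appearing in \eqref{gap_hom} with effective potential \eqref{effpot} evaluated at this $\beta$; by construction of $F$ this $\nu$ is a genuine solution of the SCE with $\nu(\sin^2(\cdot))=F(\beta,\bar\xi)+(\bar\xi-F(\beta,\bar\xi))=\bar\xi$ (for $\bar\xi=0$ take $\nu=\delta_0$, the prolate state). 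I expect the main obstacle to be the uniform-in-$\xi$ control of the Laplace expansion and its derivatives near the endpoint $u=0$ — in particular justifying that $\partial_s\widetilde F$ and $\partial_\xi\widetilde F$ extend continuously to $s=0$ — since the integrand has an integrable singularity structure only after the change of variables and one must differentiate the resulting asymptotic expansion rigorously rather than just the leading term; everything else is a routine application of the implicit function theorem.
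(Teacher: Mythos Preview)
Your approach is correct and follows the same overall strategy as the paper: establish enough regularity of $G$ near $(\infty,0)$ via Laplace-type asymptotics, then invoke the implicit function theorem using $G(\infty,0)=0$ and $\partial_\xi G(\infty,0)=1$. The implementations differ in two respects worth noting. First, you compactify by passing to $s=1/\beta$ and verify that $\widetilde G$ is genuinely $C^1$ near $(0,0)$, whereas the paper works directly on $(0,\infty]$, proves only continuity in $\beta$ together with continuous differentiability in the second variable at $\beta=\infty$ (via the general Proposition~\ref{prop:PhiAsymp} and Lemma~\ref{lem:cumulant}), and then cites a version of the implicit function theorem from \cite{loomis} that tolerates this weaker regularity. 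Your reparametrisation is arguably cleaner and sidesteps that subtlety. Second, the paper routes the Laplace analysis through the abstract measure-theoretic framework of Proposition~\ref{prop:PhiAsymp}, which is formulated for general axisymmetric potentials; your direct computation with the substitution $u=\tfrac32\sin^2\theta$ and Watson's lemma is more elementary but specific to the Maier--Saupe potential.

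For the final sentence of the theorem you are overcomplicating matters and introducing a small gap: inverting $\zeta$ would require a monotonicity argument you have not supplied. The statement only asks that every $\bar\xi\in[0,\bar\delta)$ arise as $\nu(\sin^2(\cdot))$ for \emph{some} $\nu\in F$, not for a solution of the SCE; the paper dispatches this in one line by taking $\nu$ to be a Dirac mass at $\theta_0$ with $\sin^2\theta_0=\bar\xi$.
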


Finally, we comment on the phase diagram based on Figure~\ref{fig:curves}, which mirrors a similar figure in~\cite{Maier:1959aa}, where we plot $\xi\mapsto F(\beta,\xi)$ for various values of $\beta$. The solutions of the SCE~(\ref{IFTequ}) for the order parameter $\xi$ are given by the intersection with the diagonal. As already discussed, the point $\xi_0 = 2/3 = \int_0^{\pi/2}\sin^3\theta\diff\theta$, which corresponds to the isotropic distribution, is a solution for all $0 < \beta < \infty$. At low $\beta$, it is in fact the unique solution, see Remark~\ref{uniqueness.rem}. As $\beta$ increases, two new branches of solutions $\xi_1(\beta),\xi_2(\beta)$ appear in a saddle-node bifurcation, with the coexistence of three solutions. Increasing $\beta$ further, the two anisotropic solutions drift away from each other until $\xi_2(\beta)$ crosses $\xi_0$ in the transcritical bifurcation exhibited in Proposition~\ref{kernel.prop}. From there on, $\xi_2>2/3$ becomes an unphysical solution, while $\xi_1$ converges, as $\beta\to\infty$, to its ideal nematic value, $\overline{\xi}(\infty) = 0 = \int_0^{\pi/2}\sin^3\theta\delta_0(\theta) \diff\theta$, in the notation of Theorem~\ref{zeroK.thm}. The nature of the critical points at the level of the measure cannot be concluded from the picture, but it can if~(\ref{IFTequ}) is interpreted as the derivative of an effective $(\beta,\xi)$-dependent free energy function, \`a la Landau. Indeed, the second derivative at the critical point is positive, namely $\xi$ corresponds to a physical equilibrium point, if $\partial_\xi F< 1$.
That is, $\xi$ is a stable solution of \eqref{IFTequ}. Therefore, $\xi_1$ is stable in its range of existence 
while $\xi_2$ is not. As for $\xi_0$ it is stable for $\beta<\beta_\star$ and becomes unstable as $\beta$ crosses $\beta_\star$. This exchange of stability is characteristic of a transcritical bifurcation.
\begin{center}
\begin{figure}[htb]
\includegraphics{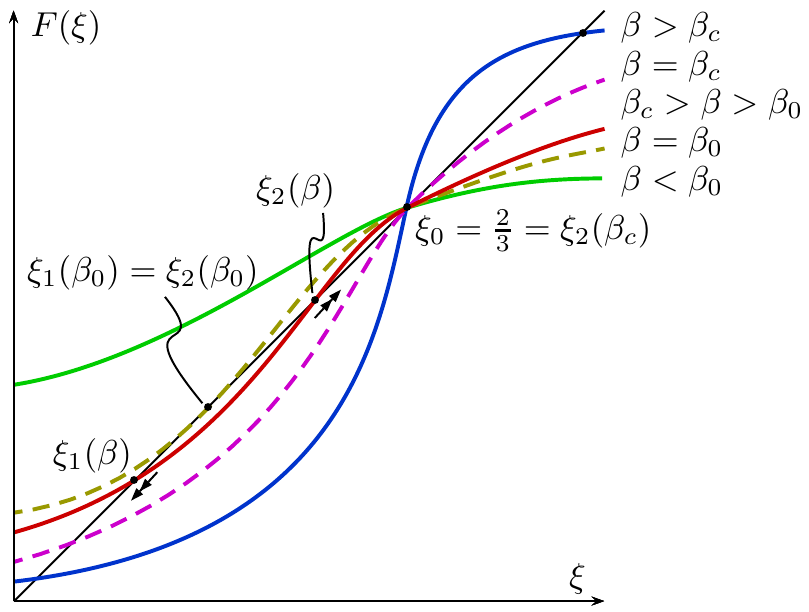}
\caption{The plot of $\xi\mapsto F(\beta,\xi)$ from~(\ref{IFTequ}) for different values of $\beta$. The curves are slightly exaggerated to improve readability.}
\label{fig:curves}
\end{figure}
\end{center}

\begin{rem}
\rm
Interestingly, the SCE in the case of the Maier--Saupe potential yields a pure differential equation for the effective potential $H_\nu$. Indeed, we multiply~(\ref{gap_hom}) by $U(m,m')$, integrate over $M$ and apply the Laplacian to get, still quite generally,
\begin{equation*}
\Delta H_\nu(m) = \frac{\int_M \Delta_m U(m,m')\e^{-\beta H_\nu(m')}\diff m'}{\int_M \e^{-\beta H_\nu(m')\diff m'}}.
\end{equation*}
If, now, $U$ is the Legendre polynomial $P_l$, then $- \Delta_m U(\cos\gamma) = l(l+1) U(\cos\gamma)$, so that
\begin{equation*}
- \Delta H_\nu = l(l+1) H_\nu.
\end{equation*}
Hence $H_\nu$ is proportional to $P_l$, as we shall see by an explicit calculation below. We also note that in the analogous case of the Coulomb potential, for which $\Delta_x V(x,y) = 4\pi \delta(x-y)$, the corresponding equation for the effective potential $\Psi(x)$ reads
\begin{equation*}
\Delta \Psi = 4\pi\kappa_\beta \exp(-\beta \Psi),
\end{equation*}
which is the well-known equation for the self-interacting gravitational potential of a mass distribution.
\end{rem}


\section{Proofs}\label{Proofs}

In this section we give complete proofs of the results stated above. 

\subsection{Mean-field limit}
First of all, a standard compactness argument yields the existence of limiting measures of Gibbs states 
as $N\to\infty$, and their symmetry implies a simple decomposition over product measures.
Indeed, since $\Omega = \Lambda\times M$ is a compact metric space, 
Tychonoff's theorem ensures that $\Omega^\infty := \times_{\bbN}\Omega$ is compact in the product topology.  
The measures $\mu_{N,\beta}$, defined originally on $\Omega^N$, 
can be extended by Hahn--Banach to $\Omega^\infty$. 
We shall denote the extension by $\mu_{N,\beta}$ again. 
Since the set of probability measures on a compact space is weakly-* compact, 
there are weak-* accumulation points, denoted $\mu_{\infty,\beta}$, that are again probability measures 
on $\Omega^\infty$. We write $\mu_{N_k}\rightharpoonup\mu_{\infty,\beta}$, as $k\to\infty$.

Since $\mu_{N,\beta}$ is symmetric for all $N\in\bbN$, namely 
$\mu_{N,\beta}(\dif X_1,\ldots, \dif X_N) = \mu_{N,\beta}(\dif X_{\pi(1)},\ldots, \dif X_{\pi(N)})$ 
for all permutations $\pi$ in the symmetric group with $N$ elements, so are the limiting measures. 
Hence, for any $\mu_{\infty,\beta}$, a theorem of Hewitt and Savage \cite{Hewitt:1955th} yields 
the existence of a unique probability measure $P$ on the set of states over the one-particle 
configuration space $\Omega$,  $\caE(\Omega)$, such that
\begin{equation*}
\mu_{\infty,\beta}^k = \int_{\nu\in\caE(\Omega)}\nu^{\otimes k}\diff P(\nu)
\end{equation*}
for any $k\in\bbN$. This decomposition of the weak accumulation points will play a crucial role in the proof of Theorem~\ref{effective theory}, once we have shown that the
limiting one-particle functional is affine.

Before we can give the proof of Theorem~\ref{effective theory}, 
we need the following properties of the entropy; see e.g.~\cite{Ruelle:1999aa} for proofs. Since they hold for arbitrary probability measures, we drop the index $N$, when not needed.
\begin{prop}
The entropy functional satisfies:
\begin{itemize}
\item[(a)] {\rm Negativity:} $\caS(\nu)\le 0$.
\item[(b)] {\rm Concavity:} $\caS(\lambda \nu + (1-\lambda) \nu') \ge \lambda\caS(\nu) + (1-\lambda)\caS(\nu')$.
\item[(c)] {\rm Upper semi-continuity:} $\limsup_{\nu\rightharpoonup\nu_0} \caS(\nu)\le \caS(\nu_0)$.
\item[(d)] {\rm Subadditivity:} for a Gibbs measure $\mu_{N,\beta}$,
\begin{equation}\label{subadditivity}
\caS_N(\mu_{N,\beta}) \le \left\lfloor\frac{N}{k}\right\rfloor \caS_k(\mu_{N,\beta}^k) + \caS_{N-k \lfloor\frac{N}{k}\rfloor }\big(\mu^{N-k \lfloor\frac{N}{k}\rfloor}_{N,\beta}\big).
\end{equation}
\end{itemize}
\end{prop}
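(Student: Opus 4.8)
The plan is to obtain (a), (b) and (d) as direct consequences of the elementary pointwise bound $t\ln t\ge t-1$ ($t\ge 0$) together with the convexity of $t\mapsto t\ln t$, and to derive (c) from the dual (variational) characterisation of relative entropy. Throughout I would write $\caS(\nu)=-H(\nu\,|\,\lambda)$, where $\lambda$ is the reference measure on the relevant power $\Omega^n$ (Lebesgue on $\Lambda^n$ times the Riemannian volume on $M^n$), taken to be a probability measure as is needed for (a), and $H(\nu\,|\,\lambda)=\int\rho\ln\rho\diff\lambda$ with $\rho=\diff\nu/\diff\lambda$ whenever $\nu\ll\lambda$, and $H(\nu\,|\,\lambda)=+\infty$ otherwise.

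For (a): if $\nu$ is not absolutely continuous there is nothing to prove; otherwise $\int\rho\ln\rho\diff\lambda\ge\int(\rho-1)\diff\lambda=\nu(\Omega^n)-\lambda(\Omega^n)=0$, so $\caS(\nu)\le0$. For (b): convexity of $t\mapsto t\ln t$ gives, pointwise in the densities, $(\lambda\rho+(1-\lambda)\rho')\ln(\lambda\rho+(1-\lambda)\rho')\le\lambda\,\rho\ln\rho+(1-\lambda)\,\rho'\ln\rho'$; integrating against $\lambda$ yields $H(\lambda\nu+(1-\lambda)\nu'\,|\,\lambda)\le\lambda H(\nu\,|\,\lambda)+(1-\lambda)H(\nu'\,|\,\lambda)$, i.e.\ concavity of $\caS$, and the statement is trivial when one of $\nu,\nu'$ fails to be absolutely continuous, since then so does $\lambda\nu+(1-\lambda)\nu'$ and both sides are $-\infty$. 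For (d): I would first prove two-block subadditivity, $\caS(\nu_A)+\caS(\nu_B)\ge\caS(\nu_{AB})$ for the marginals $\nu_A,\nu_B$ of an absolutely continuous measure $\nu_{AB}$ on a product space, by rewriting $\caS(\nu_A)+\caS(\nu_B)-\caS(\nu_{AB})$ as the mutual information $\int\rho_{AB}\ln\frac{\rho_{AB}}{\rho_A\,\rho_B}\diff(\lambda_A\otimes\lambda_B)=H(\nu_{AB}\,|\,\nu_A\otimes\nu_B)\ge0$, the last inequality being once more $t\ln t\ge t-1$ applied to $t=\rho_{AB}/(\rho_A\rho_B)$ integrated against the probability measure $\nu_A\otimes\nu_B$. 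Iterating this over the partition of the $N$ coordinates into $\lfloor N/k\rfloor$ blocks of size $k$ and one block of size $N-k\lfloor N/k\rfloor$ gives $\caS_N(\mu_{N,\beta})\le\sum_{\text{blocks}}\caS(\text{block marginal})$, and the permutation invariance of $\mu_{N,\beta}$ (a consequence of the symmetry of $v$ in Assumption~\ref{A:VBasic}) identifies every size-$k$ block marginal with $\mu_{N,\beta}^k$ and the remaining one with $\mu_{N,\beta}^{N-k\lfloor N/k\rfloor}$; this is exactly \eqref{subadditivity}.

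For (c): I would invoke the Donsker--Varadhan / Gibbs variational formula $H(\nu\,|\,\lambda)=\sup_{f\in C(\Omega^n)}\bigl(\int f\diff\nu-\ln\int\e^{f}\diff\lambda\bigr)$, valid on the compact space $\Omega^n$ with the (probability) reference measure $\lambda$; it also correctly returns $+\infty$ for singular $\nu$. For each fixed continuous $f$ the functional $\nu\mapsto\int f\diff\nu-\ln\int\e^{f}\diff\lambda$ is weak-$*$ continuous, so $H(\cdot\,|\,\lambda)$ is a supremum of weak-$*$ continuous functionals and hence weak-$*$ lower semi-continuous; therefore $\caS=-H(\cdot\,|\,\lambda)$ is weak-$*$ upper semi-continuous, which is (c). The main obstacle is precisely this last item: (a), (b) and (d) collapse to one-line applications of $t\ln t\ge t-1$ and convexity, whereas (c) requires either the dual description of relative entropy --- which one should state with some care, both to cover the chosen reference measure and to justify that the supremum over $C(\Omega^n)$ is enough --- or, alternatively, a more laborious truncation argument approximating $\rho$ by bounded densities and controlling the $\limsup$ directly; I would favour the variational formula.
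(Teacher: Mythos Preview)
The paper does not actually give a proof of this proposition: it simply refers the reader to Ruelle's textbook (``see e.g.~\cite{Ruelle:1999aa} for proofs'') and moves on. Your proposal therefore supplies what the paper deliberately omits, and the arguments you outline are the standard ones and are correct.

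A couple of small remarks. First, in part~(b) you overload the symbol $\lambda$, using it both for the reference measure and for the convex-combination parameter; this is harmless mathematically but would be worth disentangling. Second, your caveat about~(a) is well taken: negativity of $\caS$ genuinely requires the reference measure to have total mass at most one, and in the paper's setup $|\Omega|$ is not normalised, so the statement~(a) is really only correct up to an additive constant $\ln|\Omega^n|$. This is a wrinkle in the paper's formulation rather than in your argument, and in any case~(a) is not used in the proof of Theorem~\ref{effective theory}. Your treatment of~(c) via the Donsker--Varadhan variational formula is the cleanest route and exactly the one a careful reader of Ruelle would extract; the alternative truncation argument you mention also works but is indeed more laborious. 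For~(d), your reduction to nonnegativity of mutual information together with permutation symmetry of the Gibbs state is precisely the mechanism behind the cited inequality.
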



\begin{proofof} {\em Theorem~\ref{effective theory}.}
Our proof essentially follows \cite{Messer:1982aa}, although we allow for unbounded interactions here.
For an arbitrary measure $\nu$ on $\Omega$, the product measures $\nu^{\otimes N}$ 
yield an $N$-independent free energy density (see~(\ref{product})), so that 
\begin{equation}\label{MF: upper}
N^{-1}F_{N,\beta} \le \mathfrak{f}_\beta(\nu).
\end{equation}
In particular $N^{-1}F_{N,\beta} \le f_\beta$.

It remains to prove an asymptotic lower bound. In the following, we consider a fixed converging subsequence $\mu_{N_j,\beta}\rightharpoonup\mu_{\infty,\beta}$. 
Recalling~(\ref{energy}), 
\begin{equation*}
\lim_{j\to\infty} \frac{\caE_{N_j}(\mu_{{N_j},\beta})}{{N_j}} = \lim_{j\to\infty}\frac{1}{2}\int_{\Omega\times\Omega} v(X,Y) \diff\mu_{N_j,\beta}^2(X,Y) = \frac{1}{2}\int_{\Omega\times\Omega} v(X,Y) \diff\mu_{\infty,\beta}^2(X,Y)
\end{equation*}
holds by definition of the convergence of measures if $v$ is continuous. 
We further recall (see e.g.~\cite{Billingsley:1968aa})
that the convergence still holds whenever $v$ is bounded and lower semi-continuous. Finally, if $v$ is unbounded, we consider the cut-off potential $v^\lambda(X):=\min\{v(X),\lambda\}$, for which $\liminf_{\lambda\to\infty}v^\lambda(X) = v(X)$. As $v(X)\ge v^\lambda(X)$,
\begin{equation*}
\lim_{j\to\infty}\int_{\Omega\times\Omega} v(X,Y) \diff\mu_{N_j,\beta}^2(X,Y) \ge \lim_{j\to\infty}\int_{\Omega\times\Omega} v^\lambda (X,Y) \diff\mu_{N_j,\beta}^2(X,Y) = \int_{\Omega\times\Omega} v^\lambda (X,Y) \diff\mu_{\infty,\beta}^2(X,Y).
\end{equation*}
Since this holds for any $\lambda>0$, it also does for the $\liminf_{\lambda\to\infty}$, so that
\begin{equation}\label{e:lsc}
\lim_{j\to\infty}\frac{\caE_{N_j}(\mu_{{N_j},\beta})}{{N_j}} \ge \liminf_{\lambda\to\infty}\frac{1}{2}\int_{\Omega\times\Omega} v^\lambda (X,Y) \diff\mu_{\infty,\beta}^2(X,Y) \ge \frac{1}{2}\int_{\Omega\times\Omega} v (X,Y) \diff\mu_{\infty,\beta}^2(X,Y),
\end{equation}
by Fatou's lemma. Applying the Hewitt--Savage theorem to the right-hand side of the inequality yields
\begin{equation}\label{mean energy}
\lim_{j\to\infty} \frac{\caE_{N_j}(\mu_{{N_j},\beta})}{{N_j}} \ge \int_{\caE(\Omega)} \mathfrak{e}(\nu) \diff P(\nu),
\end{equation}
where
\begin{equation*}
\mathfrak{e}(\nu) := \frac{1}{2}\int_{\Omega\times\Omega} v(X,Y) \diff\nu(X)\diff\nu(Y).
\end{equation*}

We now turn to the entropy density, decomposed as in~(\ref{subadditivity}). For any absolutely continuous measure $\nu$,
\begin{equation*}
\caS_{N-k \lfloor\frac{N}{k}\rfloor }\big(\mu^{N-k \lfloor\frac{N}{k}\rfloor}_{N,\beta}\big)
= \caS_{N-k \lfloor\frac{N}{k}\rfloor }\big(\mu^{N-k \lfloor\frac{N}{k}\rfloor}_{N,\beta} \Vert \nu^{\otimes(N-k \lfloor\frac{N}{k}\rfloor)}\big)
- \int_{\Omega^{N-k \lfloor\frac{N}{k}\rfloor}} \mu^{N-k \lfloor\frac{N}{k}\rfloor}_{N,\beta}\ln \nu^{\otimes (N-k \lfloor\frac{N}{k}\rfloor)} \diff X_1^N,
\end{equation*}
where the first term is the relative entropy of $\mu^{N-k \lfloor\frac{N}{k}\rfloor}_{N,\beta}$ with respect to $ \nu^{\otimes(N-k \lfloor\frac{N}{k}\rfloor)}$, which is non-positive. As for the other term, the symmetry of the potential implies that
\begin{align*}
\frac{1}{N}\int_{\Omega^{N-k \lfloor\frac{N}{k}\rfloor}} \mu^{N-k \lfloor\frac{N}{k}\rfloor}_{N,\beta}\ln \nu^{\otimes (N-k \lfloor\frac{N}{k}\rfloor)} \diff X_1^N
= \frac{1}{N} \left(N-k\left\lfloor\frac{N}{k}\right\rfloor\right) \int_{\Omega} \mu^{1}_{N,\beta}(X)\ln \nu(X)\diff X,
\end{align*}
 and the choice $\nu(X) = \vert \Omega \vert^{-1}$ yields that this term vanishes in the limit $N\to\infty$ since $N-k\left\lfloor\frac{N}{k}\right\rfloor\le k$. Hence, (\ref{subadditivity}) yields
\begin{equation*}
\liminf_{j\to\infty} -\frac{\caS_{N_j}(\mu_{N_j,\beta})}{N_j} 
\ge -\liminf_{j\to\infty} \frac{1}{N_j}\left\lfloor\frac{N_j}{k}\right\rfloor \caS_k(\mu_{N_j,\beta}^k)
\ge -\frac{\caS_k(\mu_{\infty,\beta}^k)}{k}
\end{equation*}
for all $k\in\bbN$, by lower semi-continuity of minus the entropy. In particular $\liminf_{j\to\infty} -{N_j}^{-1}\caS_{N_j}(\mu_{N_j,\beta})\ge \liminf_{k\to\infty} - k^{-1}\caS_k(\mu_{\infty,\beta}^k)$.

Now, for any two probability measures $\rho^k,\nu^k$, 
\begin{align*}
\lambda \caS_k(\rho^k) + (1-\lambda)\caS_k(\nu^k) &\le \caS_k\left(\lambda \rho^k + (1-\lambda)\mu^k\right) \\
&\le  \lambda\caS_k(\rho^k) + (1-\lambda)\caS_k(\mu^k) - \lambda\ln\lambda - (1-\lambda)\ln(1-\lambda)\\
&\le \lambda\caS_k(\rho^k) + (1-\lambda)\caS_k(\mu^k)  + \ln 2,
\end{align*}
where we used first the concavity of the entropy and then the fact that the logarithm is an increasing function. Dividing these inequalities by $k$ and taking the limit $k\to\infty$, we have shown that the map
\begin{equation*}
\mu_{\infty,\beta}\longmapsto \liminf_{k\to\infty} -\frac{\caS_k(\mu_{\infty,\beta}^k)}{k}
\end{equation*}
is affine. Hence, 
\begin{align}
\liminf_{k\to\infty} -\frac{\caS_k(\mu_{\infty,\beta}^k)}{k} 
&= \liminf_{k\to\infty} -k^{-1}\caS_k \left[\int_{\caE(\Omega)}\nu^{\otimes k}\diff P(\nu)\right] \nonumber \\
&= \int_{\caE(\Omega)} \liminf_{k\to\infty} -k^{-1}\caS_k\left(\nu^{\otimes k}\right)\diff P(\nu) 
= \int_{\caE(\Omega)} \caS_1\left(\nu\right)\diff P(\nu), \label{mean entropy}
\end{align}
where the first equality is a consequence of the Hewitt--Savage theorem. Note that in both (\ref{mean energy}) and (\ref{mean entropy}), the measure $P$ depends on the chosen subsequence. 

This and the variational upper bound~(\ref{MF: upper}) yield
\begin{equation*}
\lim_{j\to\infty}\frac{F_{N_j,\beta}}{N_j} = \lim_{j\to\infty}\frac{\caF_{N_j,\beta}(\mu_{{N_j},\beta})}{N_j} = \int_{\caE(\Omega)} \mathfrak{f}_\beta\left(\nu\right)\diff P(\nu).
\end{equation*}
Finally, assume by contradiction that the support of $P$ contains measures that are not minimisers of $\mathfrak{f}_{\beta}$, then $\liminf_{N\to\infty} N^{-1}\caF_{N,\beta}(\mu_{N,\beta})>f_\beta$, a contradiction with the initial remark of this proof. This concludes the proof, with $P=P_\beta$ supported on the minimisers of $\mathfrak{f}_\beta$, and hence 
$\lim_{N\to\infty} N^{-1}F_{N,\beta} = f_\beta$.

It remains to show that any accumulation point is absolutely continuous. For this, it suffices to show that, for any fixed $1\le k\le N$,
the densities $\rho_{N,\beta}^k(X_1^k)$ are bounded, uniformly in $N$.  We decompose 
the interaction potential as
\begin{multline*}
V_{N}(X_1^N) = V_{N}^{\mathrm{(int)}}(X_1^k) + V_{N}^{\mathrm{(cross)}}(X_1^N) 
+ V_{N}^{\mathrm{(ext)}}(X_{k+1}^N) \\
:= \frac{1}{N-1} \Big[\sum_{1\le i<j\le k} + \sum_{1\le i\le k}\sum_{j > k} 
+ \sum_{k<i<j\le N}\Big] v\left(\abs{x_i-x_j}; m_i,m_j\right),
\end{multline*}
and use $\exp\big[-\beta\big(V_{N}^{\mathrm{(int)}}(X_{1}^k) + V_{N}^{\mathrm{(cross)}}(X_{1}^N)\big)\big]\le 1$ 
to get the immediate bound
\begin{equation}\label{rho_bound}
\rho_{N,\beta}^k(X_1^k) \le \frac{Z_{N}^{\mathrm{(ext)}}(\beta)}{Z_{N}(\beta)},
\end{equation}
where 
$$
Z_{N}^{\mathrm{(ext)}}(\beta) = \int_{\Omega^{N-k}}\exp\big(-\beta V_{N}^{\mathrm{(ext)}}(X_{k+1}^N)\big)\diff X_{k+1}^N.
$$
Applying Jensen's inequality on $\Omega^N$ with the probability measure 
$$\diff\Theta(X_1^N):=
\big(\vert \Omega\vert^k Z_{N}^{\mathrm{(ext)}}(\beta)\big)^{-1} 
\exp\big(-\beta V_{N}^{\mathrm{(ext)}}(X_{k+1}^N)\big)\diff X_{1}^N
$$ 
yields
\begin{align}\label{Jensen}
Z_{N}(\beta) &= \vert \Omega\vert^k Z_{N}^{\mathrm{(ext)}}(\beta)
\int_{\Omega^N} \exp\left(-\beta \Big(V_{N}^{\mathrm{(int)}} 
+ V_{N}^{\mathrm{(cross)}}\Big) \right) \diff\Theta(X_1^N) \notag \\
& \ge \vert \Omega\vert^k Z_{N}^{\mathrm{(ext)}}(\beta)
\exp\left(-\beta  \int_{\Omega^N} \Big(V_{N}^{\mathrm{(int)}}+V_{N}^{\mathrm{(cross)}}\Big) \diff\Theta(X_1^N) \right).
\end{align}
Observe now that, in view of Assumption~\ref{A:VBasic}~(c),
\begin{align*}
\int_\Omega v(X,Y)\diff X 
&= \int_{\Lambda\times{M}} v(|x-y|;m,m')\diff x \diff m \\
&\le \omega_d \int_M \int_0^D r^{d-1} v(r;m,m') \diff r  \diff m\\
&\le |M|\Vert J\Vert_{L^\infty(M\times M)},
\end{align*}
for all $ Y=(y,m')\in \Omega$. Hence, it follows by Assumption~\ref{A:VBasic}~(a) that
\begin{align}\label{Vin_est}
\int_{\Omega^N} V_{N}^{\mathrm{(int)}}(X_1^k)\diff\Theta(X_1^N)
&=|\Omega|^{-k}\int_{\Omega^k} V_{N}^{\mathrm{(int)}}(X_1^k) \diff X_1^k \notag\\
&=|\Omega|^{-k}\frac{k(k-1)}{2(N-1)}|\Omega|^{k-2}\int_{\Omega^2}v(X,Y)\diff X\diff Y \notag\\
&\le \frac{k(k-1)|M|\Vert J\Vert_{L^\infty}}{2(N-1)|\Omega|} \notag\\
&\le k|\Lambda|^{-1}\Vert J\Vert_{L^\infty}
\end{align}
and that
\begin{align}\label{Vcross_est}
\int_{\Omega^N} V_{N}^{\mathrm{(cross)}}(X_1^N)\diff\Theta(X_1^N) 
&= \frac{k(N-k)}{N-1}\int_{\Omega^{N-1}} \left(\int_\Omega v(X_1,X_{k+1})\diff X_1\right)\diff\Theta(X_2^N) \nonumber\\
&\le \frac{k(N-k)|M|\Vert J\Vert_{L^\infty}}{(N-1)|\Omega|} \notag\\
&\le k|\Lambda|^{-1}\Vert J\Vert_{L^\infty}.
\end{align}
Now, by \eqref{rho_bound}, \eqref{Jensen}, \eqref{Vin_est} and \eqref{Vcross_est}, 
$$
\rho_{N,\beta}^k(X_1^k) \le |\Omega|^{-k} \exp\big(2k|\Lambda|^{-1}\Vert J\Vert_{L^\infty}\big), 
\quad\text{for all} \ 1\le k\le N,
$$
which concludes the proof. 
\end{proofof} 


\subsection{Self-consistency equation}
We now prove that the minimisers of the one-particle functional obtained in Theorem~\ref{effective theory}
are solutions of the self-consistency equation (SCE) \eqref{GapEq}, as 
claimed in Proposition~\ref{prop:Gap Eq}.

\begin{proofof} {\em Proposition~\ref{prop:Gap Eq}.} 
We denote by $L^1(\Omega)_+$ the cone of almost everywhere positive functions
$\nu\in L^1(\Omega)$. With the slight abuse of notation $\dif\nu(X)=\nu(X)\dif X$,
we seek critical points in $L^1(\Omega)_+$ of the functional $\mathfrak{f}_\beta$ 
under the constraint $\int_\Omega \nu(X)\dif X=1$. By the Lagrange multiplier theorem, 
if $\nu_0\in L^1(\Omega)_+$ is a constrained critical point, there exists $\lambda\in\real$ such that
$$
\int_{\Omega} v(x,x';m,m') \nu_0(x,m')\diff x'\diff m' + \beta^{-1} (1 + \ln\nu_0(x,m)) - \lambda=0,
\quad \text{a.e.} \ (x,m)\in\Omega.
$$ 
In view of the normalisation $\int_\Omega \nu_0=1$, 
the proposition now follows by taking exponentials of both sides of this equation.
\end{proofof}


\subsection{Phase transitions}

In this subsection we study in detail the structure of the set of solutions 
of the homogeneous SCE \eqref{gap_hom}, depending on the inverse temperature $\beta$. 
Our proofs rely on standard bifurcation theory, mainly the {\em Crandall--Rabinowitz theorem}
which describes, for general nonlinear problems, bifurcation from a simple eigenvalue of the
linearisation. A simple exposition can be found in Chapter~2 of \cite{am}. The
theory itself relies on linear functional analysis, in particular on properties of {\em Fredholm operators},
which are recalled in \cite{am} and discussed in more detail e.g.~in Chapter~6 of \cite{Brezis:2011aa}.

\subsubsection{High temperature and transcritical bifurcation}

In order to address the transcritical bifurcation, we first need to
compute the relevant derivatives of the mapping $\Phi$ defined in
\eqref{Phi}. We denote by $D_\beta \Phi(\beta,\nu)$ 
the Fr\'echet derivative of $\Phi$ with respect to
$\beta$, evaluated at the point $(\beta,\nu)\in (0,\infty)\times L^2(M)$, and we use
a similar notation for the other derivatives of $\Phi$.
The following lemma can then be proved by routine verifications.

\begin{lem}\label{derivatives.lem}
The mapping $\Phi:(0,\infty)\times L^2(M)\to L^2(M)$ is smooth and the 
following formulas hold: 
\begin{equation*}
D_\beta \Phi(\beta,\nu)=\left[H_\nu-
\frac{\int_{M} H_\nu\e^{-\beta H_\nu}}{\int_{M} \e^{-\beta H_\nu}}\right]
\frac{\e^{-\beta H_\nu}}{\int_{M} \e^{-\beta H_\nu}};
\end{equation*}
\begin{equation*}
D_\nu \Phi(\beta,\nu)\mu=\mu+\beta
\left[H_\mu-\frac{\int_{M} H_\mu\e^{-\beta H_\nu}}{\int_{M} \e^{-\beta H_\nu}}\right]
\frac{\e^{-\beta H_\nu}}{\int_{M} \e^{-\beta H_\nu}}, \quad \mu\in L^2(M);
\end{equation*}
\begin{multline*}
D^2_{\beta \nu} \Phi(\beta,\nu)\mu=
\left\{1-\beta\left[H_\nu-\frac{\int_{M} H_\nu\e^{-\beta H_\nu}}{\int_{M} \e^{-\beta H_\nu}}\right]\right\}
\left[H_\mu-\frac{\int_{M} H_\mu\e^{-\beta H_\nu}}{\int_{M} \e^{-\beta H_\nu}}\right]
\frac{\e^{-\beta H_\nu}}{\int_{M} \e^{-\beta H_\nu}}\\
-\beta\frac{\int_{M} H_\nu\e^{-\beta H_\nu}\int_{M} H_\mu\e^{-\beta H_\nu}
-\int_{M} \e^{-\beta H_\nu}\int_{M} H_\nu H_\mu\e^{-\beta H_\nu}}
{\left(\int_{M} \e^{-\beta H_\nu}\right)^2}
\frac{\e^{-\beta H_\nu}}{\int_{M} \e^{-\beta H_\nu}}, \quad \mu\in L^2({M});
\end{multline*}
\begin{multline*}
D^2_{\nu\nu} \Phi(\beta,\nu)[\mu,\rho]=
\beta^2\left[
\frac{\int_{M} H_\mu H_\rho\e^{-\beta H_\nu}+H_\mu\int_{M} H_\rho\e^{-\beta H_\nu}
+H_\rho\int_{M} H_\mu\e^{-\beta H_\nu}}{\int_{M} \e^{-\beta H_\nu}}
\right]\frac{\e^{-\beta H_\nu}}{\int_{M} \e^{-\beta H_\nu}} \\
-\beta^2\left[2\frac{\int_{M} H_\rho\e^{-\beta H_\nu}\int_{M} H_\mu\e^{-\beta H_\nu}}
{\left(\int_{M} \e^{-\beta H_\nu}\right)^2}
+H_\rho H_\mu\right]\frac{\e^{-\beta H_\nu}}{\int_{M} \e^{-\beta H_\nu}},
\quad \rho,\mu\in L^2({M}).
\end{multline*}
\end{lem}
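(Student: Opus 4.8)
The plan is to exhibit $\Phi$ as a composition of manifestly smooth maps and then read off the derivatives from the chain, product and quotient rules; the four explicit formulas in the statement are then pure bookkeeping. To begin, note that since $U\in L^\infty(M\times M)$ and $\abs{M}<\infty$, the Cauchy--Schwarz inequality gives $\norm{H_\nu}_{L^\infty(M)}\le\norm{U}_{L^\infty}\abs{M}^{1/2}\norm{\nu}_{L^2(M)}$, so $\nu\mapsto H_\nu$ is a bounded linear operator $L^2(M)\to L^\infty(M)$; it is therefore $C^\infty$, with $D_\nu H_\nu[\mu]=H_\mu$ and vanishing higher derivatives. Since the inclusion $L^\infty(M)\hookrightarrow L^2(M)$ is continuous (again because $\abs{M}<\infty$) and $(\beta,\nu)\mapsto-\beta H_\nu$ is the restriction to $(0,\infty)\times L^2(M)$ of a bounded bilinear map $\real\times L^2(M)\to L^\infty(M)$, it too is smooth.

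Next I would prove that the superposition (Nemytskii) operator $G(\beta,\nu):=\e^{-\beta H_\nu}$ is $C^\infty$ from $(0,\infty)\times L^2(M)$ to $L^\infty(M)$, hence also into $L^2(M)$. The key device is the factorisation, valid because $\nu\mapsto H_\nu$ is linear,
\[
G(\beta_0+h,\,\nu_0+\mu)=\e^{-\beta_0 H_{\nu_0}}\,\e^{w},\qquad
w:=-\beta_0 H_\mu-h\,H_{\nu_0}-h\,H_\mu ,
\]
together with $\norm{w}_{L^\infty}=O\big(\abs{h}+\norm{\mu}_{L^2}\big)$ and the elementary pointwise bound $\big\lvert\,\e^{x}-\sum_{j=0}^{n}x^{j}/j!\,\big\rvert\le\abs{x}^{n+1}\e^{\abs{x}}/(n+1)!$. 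Expanding $\e^{w}$ to the required order yields Fr\'echet derivatives of $G$ of every order, with remainders that are uniform on bounded sets; in particular $D_\nu G(\beta,\nu)[\mu]=-\beta H_\mu\,G(\beta,\nu)$ and $D_\beta G(\beta,\nu)=-H_\nu\,G(\beta,\nu)$, and analogously for the second derivatives.

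Composing $G$ with the bounded linear functional $g\mapsto\int_M g\diff m$ shows that the partition function $N(\beta,\nu):=\int_M \e^{-\beta H_\nu}\diff m$ is $C^\infty$ into $(0,\infty)$, with $D_\nu N(\beta,\nu)[\mu]=-\beta\int_M H_\mu\,G$ and $D_\beta N(\beta,\nu)=-\int_M H_\nu\,G$; since $x\mapsto 1/x$ is smooth on $(0,\infty)$, the map $1/N$ is $C^\infty$ into $\real$. As scalar multiplication $\real\times L^2(M)\to L^2(M)$ is bounded bilinear, hence smooth, $G/N$ is $C^\infty$ into $L^2(M)$, and therefore so is $\Phi=\mathrm{id}_{L^2(M)}-G/N$. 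The displayed formulas now follow by the quotient rule; for instance
\[
D_\nu\!\Big(\frac{G}{N}\Big)[\mu]
=\frac{(D_\nu G[\mu])\,N-G\,(D_\nu N[\mu])}{N^{2}}
=-\beta\,\frac{G}{N}\Big(H_\mu-\frac{\int_M H_\mu\,G}{N}\Big),
\]
so that $D_\nu\Phi(\beta,\nu)[\mu]=\mu+\beta\,\frac{\e^{-\beta H_\nu}}{\int_M\e^{-\beta H_\nu}}\big(H_\mu-\frac{\int_M H_\mu\,\e^{-\beta H_\nu}}{\int_M\e^{-\beta H_\nu}}\big)$, in agreement with the statement; likewise $D_\beta\Phi=\frac{G}{N}\big(H_\nu-\frac{\int_M H_\nu G}{N}\big)$, and $D^2_{\beta\nu}\Phi$, $D^2_{\nu\nu}\Phi$ are obtained by differentiating these identities once more and collecting terms.

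I expect the only point requiring genuine care to be the smoothness of the exponential superposition operator $G$ between the $L^\infty(M)$ (and thence $L^2(M)$) spaces --- specifically, verifying that the Taylor remainders above are $o$ of the corresponding power of $\norm{(h,\mu)}$ \emph{uniformly} on bounded sets, and that this persists at every order so that $G$ is $C^\infty$ rather than merely finitely differentiable. Everything else is routine: boundedness of $\nu\mapsto H_\nu$, strict positivity of the partition function, and the algebraic manipulation of the quotient rule needed to match the four identities in the statement.
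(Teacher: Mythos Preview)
Your proposal is correct and is precisely the kind of ``routine verification'' the paper invokes without giving any details: the paper offers no proof of this lemma beyond that phrase, so there is nothing to compare against at the level of strategy. Your decomposition of $\Phi$ into the bounded linear map $\nu\mapsto H_\nu$ (into $L^\infty$), the Nemytskii exponential, the partition-function functional, and the quotient is the natural way to justify smoothness, and the derivative formulas you read off match the statement.
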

It follows from Lemma~\ref{derivatives.lem} that
\begin{equation}\label{compact_pert}
D_\nu \Phi(\beta,|{M}|^{-1})= I-\beta K,
\end{equation}
and
\begin{equation}\label{formula_for_K}
D^2_{\beta \nu} \Phi(\beta_\star,|M|^{-1})\mu(m)=
|{M}|^{-1}\int_M \vt(m,m')\mu(m')\diff m'=-(K\mu)(m),
\end{equation}
where the operator $K:L^2(M)\to L^2(M)$ has been defined in \eqref{K}.


\medskip
\begin{proofof} {\em Proposition~\ref{high_temp.prop}.}
If $\beta^{-1}>\norm{K}$ then the right-hand side of \eqref{linearisation} is a contraction,
and so \eqref{linearisation} has only the trivial solution $\mu=0$ 
by the contraction mapping principle. Since the operator $K$ is self-adjoint and compact,
it follows from the Fredholm alternative \cite[Theorem~6.6]{Brezis:2011aa} that the linear operator 
$D_\nu \Phi(\beta,|{M}|^{-1}):L^2(M)\to L^2(M)$ is 
an isomorphism, for any $\beta>0$. Therefore, invoking the implicit function theorem,
we conclude that through each solution $(\beta,|{M}|^{-1})$ on the isotropic
line, there passes a unique (local) curve of solutions of \eqref{operator_gap}, that is,
the isotropic line itself. Hence there can be no bifurcations.
\end{proofof}


\begin{proofof} {\em Theorem~\ref{transition.thm}.}
The result is a direct consequence of
the Crandall--Rabinowitz theorem (see \cite{cranrab} or \cite[Theorem~2.8]{am})
which, in the present context, can be stated as follows.
\begin{thm}\label{bifurcation.thm}
For $\beta>0$, we let $A_{\beta}:=D_\nu\Phi(\beta,|{M}|^{-1}): L^2({M})\to L^2({M})$,
and we make the following assumptions:
\begin{itemize}
\item[(A)] there exists $\beta_\star>0$ such that
 $\ker A_{\beta_\star}=\vect\{\mu_\star\}$ for some $\mu_\star\in L^2({M})\sm\{0\}$;
\item[(B)] $A_{\beta_\star}$  is a Fredholm operator of index zero;
\item[(C)] $D^2_{\beta \nu} \Phi(\beta_\star,|{M}|^{-1}) \mu_\star \not\in \rge A_{\beta_\star}$.
\end{itemize}
Then there exists $\ep>0$ and a continuous map 
\begin{equation}\label{curve}
s\mapsto (\beta(s),\rho(s)): (-\ep,\ep)\to (0,\infty)\times[\mu_\star]^\perp
\end{equation}
with the following properties:
\begin{itemize}
\item[(a)] $(\beta(s),\nu(s))$ is a solution of \eqref{operator_gap} for all $s \in(-\ep,\ep)$, 
where $\nu(s):=|M|^{-1}+s(\mu_\star+\rho(s))$;
\item[(b)] $\beta(0)=\beta_\star$ and $\rho(0)=0$;
\item[(c)] there is a neighbourhood $\mathcal{O}$ of $(\beta_0,|M|^{-1})$ in  
$(0,\infty)\times L^2(M)$ such that all solutions of \eqref{operator_gap} inside $\mathcal{O}$
belong to the curve \eqref{curve}.
\end{itemize}
\end{thm}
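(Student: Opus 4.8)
The plan is to prove this statement by the classical \emph{Lyapunov--Schmidt reduction}, which is the standard route to the Crandall--Rabinowitz theorem and yields simultaneously the existence of the branch, its parametrisation in the stated form, and the local completeness in~(c). Write $u_0:=|M|^{-1}$ and $A:=A_{\beta_\star}$. The crucial structural input is that $\Phi(\beta,u_0)=0$ for all $\beta>0$ (the isotropic line is a solution for every temperature), so the whole analysis is a study of \emph{nontrivial} zeros branching off this line. Using~(A) and~(B), I would first fix the functional-analytic splitting: since $A$ is Fredholm of index zero with $\ker A=\vect\{\mu_\star\}$ one-dimensional, its range is closed and of codimension one, so $L^2(M)=\ker A\oplus[\mu_\star]^\perp$ and $L^2(M)=\rge A\oplus Y_1$ with $\dim Y_1=1$. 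Let $Q$ be the projection onto $Y_1$ along $\rge A$; note $QA=0$ and that $A$ restricts to an isomorphism $[\mu_\star]^\perp\to\rge A$.

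Next I would solve the \emph{auxiliary equation}. Seeking solutions in the form $\nu=u_0+s\mu_\star+w$ with $s\in\real$ and $w\in[\mu_\star]^\perp$, I project $\Phi(\beta,\nu)=0$ onto $\rge A$ and consider $(\beta,s,w)\mapsto (I-Q)\Phi(\beta,u_0+s\mu_\star+w)$. This map vanishes at $(\beta_\star,0,0)$, and its partial derivative in $w$ there equals $(I-Q)A|_{[\mu_\star]^\perp}$, an isomorphism onto $\rge A$. Because $\Phi$ is smooth by Lemma~\ref{derivatives.lem}, the implicit function theorem yields a smooth, locally unique solution $w=w(\beta,s)$ with $w(\beta_\star,0)=0$; moreover $w(\beta,0)=0$ for all $\beta$, since $u_0$ itself solves the equation. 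Differentiating the auxiliary equation in $s$ at $(\beta_\star,0)$ and using $A\mu_\star=0$ together with $(I-Q)A=A$ gives $A\,\partial_s w(\beta_\star,0)=0$, whence $\partial_s w(\beta_\star,0)=0$.

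The remaining one-dimensional projection is the \emph{bifurcation equation}. Substituting $w(\beta,s)$ into $Q\Phi=0$ produces a scalar map $\Psi(\beta,s):=Q\Phi(\beta,u_0+s\mu_\star+w(\beta,s))\in Y_1\cong\real$. Since $\Psi(\beta,0)=Q\Phi(\beta,u_0)=0$, I factor $\Psi(\beta,s)=s\,\phi(\beta,s)$ via $\phi(\beta,s)=\int_0^1\partial_s\Psi(\beta,ts)\diff t$, so that nontrivial solutions correspond to zeros of the smooth function $\phi$. A short computation gives $\phi(\beta_\star,0)=\partial_s\Psi(\beta_\star,0)=QA\mu_\star=0$, and then, differentiating once more in $\beta$ and discarding the terms that land in $\rge A=\ker Q$,
\begin{equation*}
\partial_\beta\phi(\beta_\star,0)=Q\,D^2_{\beta\nu}\Phi(\beta_\star,u_0)\mu_\star.
\end{equation*}
This is precisely where hypothesis~(C) enters: $\partial_\beta\phi(\beta_\star,0)\ne0$ if and only if $D^2_{\beta\nu}\Phi(\beta_\star,u_0)\mu_\star\notin\rge A$. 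A second application of the implicit function theorem then solves $\phi(\beta,s)=0$ as a smooth curve $\beta=\beta(s)$ with $\beta(0)=\beta_\star$.

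Finally, I would assemble the branch and match the stated normal form. Setting $\nu(s):=u_0+s\mu_\star+w(\beta(s),s)$ gives solutions of~\eqref{operator_gap} for all small $s$, trivial only at $s=0$. Since $w(\beta,0)=0$, a Hadamard-type factorisation $w(\beta(s),s)=s\rho(s)$ with $\rho(s):=\int_0^1\partial_s w(\beta(s),ts)\diff t\in[\mu_\star]^\perp$ continuous and $\rho(0)=\partial_s w(\beta_\star,0)=0$ reproduces exactly the claimed form $\nu(s)=|M|^{-1}+s(\mu_\star+\rho(s))$. The completeness assertion~(c) follows from the two local-uniqueness clauses of the implicit function theorem: near $(\beta_\star,u_0)$ every solution has its $[\mu_\star]^\perp$-component equal to $w(\beta,s)$, and the trivial line $s=0$ together with the curve $\phi=0$ exhaust the zeros of $\Psi$. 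The main obstacle is the transversality step: one must verify $\partial_s w(\beta_\star,0)=0$ and carefully differentiate the implicitly defined reduced map to identify $\partial_\beta\phi(\beta_\star,0)$ with $Q\,D^2_{\beta\nu}\Phi(\beta_\star,u_0)\mu_\star$, which is the point where~(C) is used; the accompanying regularity bookkeeping for the factorisations $\Psi=s\phi$ and $w=s\rho$ is what makes the smoothness from Lemma~\ref{derivatives.lem} indispensable.
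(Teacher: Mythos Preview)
Your proof is correct: this is exactly the classical Lyapunov--Schmidt reduction that underlies the Crandall--Rabinowitz theorem, and all the computations (the splitting, the auxiliary equation, $\partial_s w(\beta_\star,0)=0$, the identification $\partial_\beta\phi(\beta_\star,0)=Q\,D^2_{\beta\nu}\Phi(\beta_\star,u_0)\mu_\star$, and the Hadamard factorisations) check out. One minor point: in your factorisation $w(\beta(s),s)=s\rho(s)$ you use $w(\beta(s),0)=0$, which holds because $w(\beta,0)=0$ for \emph{all} $\beta$ near $\beta_\star$; this also gives $\partial_\beta w(\beta_\star,0)=0$, so that $\rho(0)=0$ regardless of whether the integral is taken along the second slot only or along the full curve.

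The paper, however, takes a different route: it does not prove Theorem~\ref{bifurcation.thm} at all. It simply identifies the statement as the Crandall--Rabinowitz theorem in the present setting and cites \cite{cranrab} and \cite[Theorem~2.8]{am}. The paper's own work (Lemma~\ref{BC.lem}) is instead to verify that hypotheses (B) and (C) follow automatically from (A) via the specific structure $A_\beta=I-\beta K$ with $K$ compact self-adjoint. So your proposal supplies a self-contained argument where the paper defers to the literature; both are legitimate, but yours is considerably more than what the paper actually does at this spot.
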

The crucial step in applying Theorem~\ref{bifurcation.thm}
is to check (A), i.e.~to find the transition temperature
$\beta_\star$ where the bifurcation occurs. Then the other 
assumptions are automatically satisfied, as ensured by the following lemma.

\begin{lem}\label{BC.lem}
If $A_{\beta_\star}$ satisfies (A) for some $\beta_\star>0$, then (B) and (C) also hold.
\end{lem}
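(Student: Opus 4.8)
The plan is to verify the two remaining Crandall--Rabinowitz hypotheses (B) and (C) directly from the structure of $A_{\beta_\star} = D_\nu\Phi(\beta_\star,|M|^{-1}) = I - \beta_\star K$, where $K$ is the compact self-adjoint operator from \eqref{K}.

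For (B), the key observation is \eqref{compact_pert}: $A_{\beta_\star}$ is the identity minus the compact operator $\beta_\star K$. By the classical Riesz--Schauder theory (see e.g.~\cite[Chapter~6]{Brezis:2011aa}), any operator of the form $I - C$ with $C$ compact on a Banach space is Fredholm of index zero. Hence (B) is immediate; this step is routine. One may additionally note that $\mathrm{rge}\,A_{\beta_\star}$ is closed and, by self-adjointness of $K$, equals $(\ker A_{\beta_\star})^\perp = [\mu_\star]^\perp$, which will be convenient for checking (C).

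For (C), I need to show $D^2_{\beta\nu}\Phi(\beta_\star,|M|^{-1})\mu_\star \notin \mathrm{rge}\,A_{\beta_\star}$. By \eqref{formula_for_K}, the left-hand side equals $-K\mu_\star$. Now $\mu_\star \in \ker A_{\beta_\star}$ means $\mu_\star = \beta_\star K\mu_\star$, so $K\mu_\star = \beta_\star^{-1}\mu_\star$; that is, $\mu_\star$ is an eigenvector of $K$ with eigenvalue $\beta_\star^{-1} \neq 0$. Therefore $D^2_{\beta\nu}\Phi(\beta_\star,|M|^{-1})\mu_\star = -\beta_\star^{-1}\mu_\star$, a nonzero multiple of $\mu_\star$. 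Since $\mathrm{rge}\,A_{\beta_\star} = [\mu_\star]^\perp$ by the self-adjointness argument above, and $\mu_\star \notin [\mu_\star]^\perp$ (as $\|\mu_\star\|^2 \neq 0$), we conclude $-\beta_\star^{-1}\mu_\star \notin \mathrm{rge}\,A_{\beta_\star}$, which is exactly (C).

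The only mild subtlety — and the one point worth writing out carefully rather than the Fredholm bookkeeping — is the identification $\mathrm{rge}\,A_{\beta_\star} = [\mu_\star]^\perp$. This follows from self-adjointness of $K$ (hence of $A_{\beta_\star}$) together with closedness of the range: for a self-adjoint Fredholm operator $A$ on a Hilbert space, $\mathrm{rge}\,A = (\ker A^*)^\perp = (\ker A)^\perp$, and here $\ker A_{\beta_\star} = \mathrm{span}\{\mu_\star\}$ by hypothesis (A). I expect no real obstacle here; the proof is short and essentially an exercise in spectral theory for compact self-adjoint operators combined with the explicit formulas of Lemma~\ref{derivatives.lem}.
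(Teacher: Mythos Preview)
Your proposal is correct and follows essentially the same approach as the paper: both arguments use that $A_{\beta_\star} = I - \beta_\star K$ is a compact perturbation of the identity (hence Fredholm of index zero) and that, by self-adjointness, $\rge A_{\beta_\star} = [\mu_\star]^\perp$, together with the identity $K\mu_\star = \beta_\star^{-1}\mu_\star$ from hypothesis~(A) and formula~\eqref{formula_for_K}. The only cosmetic difference is that the paper phrases the verification of~(C) as a contradiction via the Fredholm alternative, whereas you compute $D^2_{\beta\nu}\Phi(\beta_\star,|M|^{-1})\mu_\star = -\beta_\star^{-1}\mu_\star$ directly and observe it lies outside $[\mu_\star]^\perp$.
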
 

\begin{proof}
It follows immediately from \eqref{compact_pert} that
\begin{itemize}
\item[(i)] $A_{\beta_\star}$ is a bounded self-adjoint operator;
\item[(ii)] $A_{\beta_\star}$ is a compact perturbation of the identity.
\end{itemize}
The Fredholmness of $A_{\beta_\star}$ is a consequence of (ii). Then $A_{\beta_\star}$ has a closed
range, and the zero index property follows
from (i) by the usual decomposition $L^2(M)=\ker A_{\beta_\star} \oplus \rge A_{\beta_\star}$. 
This proves (B).

To prove (C), suppose by contradiction that there exists $\lambda\in L^2(M)$ such that
\begin{equation}\label{transverse1}
D^2_{\beta \nu} \Phi(\beta_\star,|M|^{-1}) \mu_\star=D_\nu \Phi(\beta_\star,|M|^{-1}) \lambda.
\end{equation}
By \eqref{formula_for_K}, this is equivalent to
$$
-K\mu_\star=(I-\beta_\star K)\lambda.
$$
Using (A), this becomes 
$$
(I-\beta_\star K)\lambda=-\beta_\star^{-1}\mu_\star.
$$
By the Fredholm alternative \cite[Theorem~6.6]{Brezis:2011aa}, 
this equation has a solution $\lambda$ if and only if
$\mu_\star\in (\ker A_{\beta_\star})^\perp$, a contradiction. Therefore, (C) must hold.
\end{proof}
Combining Theorem~\ref{bifurcation.thm} and Lemma~\ref{BC.lem} proves
Theorem~\ref{transition.thm}.
\end{proofof}


\begin{proofof} {\em Proposition~\ref{kernel.prop}.}
To compute the kernel it is convenient to expand the potential in
spherical harmonics. Consider any pair of directions $m,m'\in\stwo$. For
$\gamma=\measuredangle(m,m')$, there holds
\begin{equation}\label{addition}
 P_2(\cos\gamma)=\frac{4\pi}{5}
 \sum_{m=-2}^2 Y_{2 m}(\theta,\varphi)Y^*_{2 m}(\theta',\varphi'),
\end{equation}
where $(\theta,\varphi)$ and $(\theta',\varphi')$ are the respective spherical coordinates 
of $m$ and $m'$. We use the standard notation for spherical harmonics
on $\stwo$ and ${\cdot}^*$ denotes complex conjugation.

Now, we are looking for $\beta>0$ and $\mu=\mu(\theta,\varphi)$ 
satisfying the linearised equation $\mu=\beta K\mu$. For the purpose of calculations, we extend
$\mu$ by antipodal symmetry, $\mu(m)\equiv\mu(-m)$, to the whole of $\stwo$. We can then decompose
it as
\begin{equation}\label{decomp_mu}
 \mu(\theta,\varphi)=
 \sum_{\ell=0}^\infty\sum_{m=-\ell}^\ell \mu_{\ell m}Y_{\ell m}(\theta,\varphi),
\end{equation}
where the (complex) coefficients $\mu_{\ell m}$ are given by
\begin{equation*}
 \mu_{\ell m}=\int_0^{2\pi}\int_0^{\pi} 
 \mu(\theta,\varphi)Y^*_{\ell m}(\theta,\varphi) \sin\theta\diff\theta\diff\varphi.
\end{equation*}

On the other hand, it follows from \eqref{constancy} that 
$\vt(m,m')=-wP_2(\cos\gamma)$
in \eqref{dev}. Using the invariance of $\mu$ and of the spherical harmonics under the antipodal transformation $m\mapsto{-m}, \ m\in\stwo$, we then find by \eqref{addition} that 
the operator $K$ in \eqref{K} has the explicit form
$$
(K\mu)(\theta,\ffi)
=\frac{\ow}{5}\sum_{m=-2}^2\mu_{2 m} Y_{2 m}(\theta,\varphi), \quad \mu\in L^2(M).
$$
Therefore, using \eqref{decomp_mu}, the linearised equation $\mu=\beta K\mu$ becomes
\begin{equation*}
 \sum_{\ell=0}^\infty\sum_{m=-\ell}^\ell \mu_{\ell m}Y_{\ell m}(\theta,\varphi)
 =\frac{\beta\ow}{5}\sum_{m=-2}^2\mu_{2 m} Y_{2 m}(\theta,\varphi).
\end{equation*}
Identifying the respective coefficients of the spherical harmonics then yields
\begin{equation*}
  \mu_{\ell m}=0 \quad \forall \ell\neq2, \qquad  
  \mu_{2 m}=\frac{\beta\ow}{5}\mu_{2 m} \quad \forall \,m\in\{-2,-1,0,1,2\}.
\end{equation*}
Hence the kernel of $A_\beta$ is non-trivial if and only if 
\begin{equation*}
 \beta=\beta_\star(\ow):=\frac{5}{\ow}.
\end{equation*}
Furthermore, seeking a real solution yields $\mu_{2 m}=0$ for $m\neq0$, so that
$\ker A_{\beta_\star}=\vect\{\mu_\star\}$ with
\begin{equation*}
 \mu_\star(\theta,\varphi)= \mu_\star(\theta)=3 \cos^2\theta-1.
\end{equation*}

Finally, in view of the discussion in Remark~2.9 of \cite{am},  
the transcriticality of the bifurcation follows from the observation that
\begin{equation}\label{transcriticality}
B:=\la \mu_\star,D^2_{\nu\nu}\Phi(\beta_\star,|M|^{-1})[\mu_\star,\mu_\star] \ra \neq0.
\end{equation}
To prove \eqref{transcriticality}, 
we use the formula for the second derivative given in Lemma~\ref{derivatives.lem}.
Since $\int_M H_{\mu_\star}\diff m=0$, we find
$$
D^2_{\nu\nu}\Phi(\beta_\star,|M|^{-1})[\mu_\star,\mu_\star]
=\beta_\star^2|M|^{-2}\left[\int_M H_{\mu_\star}^2\diff m - |M|H_{\mu_\star}^2\right].
$$
Moreover $\int_M \mu_\star\diff m=0$, and so
\begin{align*}
B = -\beta_\star^2|M|^{-1} \la\mu_\star,H_{\mu_\star}^2\ra
= -\beta_\star^2|M|^{-1}\int_{M}
\mu_\star(m)\left(\int_M U(m,m')\mu_\star(m')\diff m'\right)^2\diff m.
\end{align*}
Now an explicit calculation using \eqref{MSpotential}, \eqref{eigen} 
and the addition formula \eqref{addition} yields 
$$
\intstwo U(m,m')\mu_\star(m')\diff m'
=-\ow\frac{2\pi}{5}(3\cos^2\theta-1)
$$
and, finally,
$$
B=\frac{8}{5}\Big(\frac{2\pi}{5}\Big)^2\beta_\star^2\ow^2>0.
$$
The proposition is proved.
\end{proofof}


\subsubsection{Low temperature}

Finally, we analyse in some detail the zero temperature limit, where one expects the thermal state to describe all molecules to be perfectly aligned. The natural framework in this case is to consider $\beta$
in the extended half-line $(0,\infty]$ and $\nu\in F$, where $F$ is the Banach space of regular signed measures over $M$, 
equipped with the norm of total variation. By the Riesz--Markov theorem, 
the norm of $\mu\in F$ is given by
\begin{equation}\label{Riesz-Markov}
\Vert \mu\Vert = |\mu|(M) = \sup \{\vert\mu(f)\vert: f\in C^0(M), \ \Vert f\Vert_{L^\infty} = 1\},
\end{equation}
where $|\mu|$ is the total variation of $\mu$. We shall use the standard notation $\mu(f) := \int_{M} f \diff\mu$. Note that probability measures lie on the unit sphere of $F$. 
\begin{lem}\label{lem:HConv}
For any $\nu\in F$, let
\begin{equation*}
H_\nu(m) := \nu(U(m,\cdot)) = \int_{M} U(m,m')\diff\nu(m').
\end{equation*}
\begin{enumerate}
\item If $U\in C^k(M\times M;\bbR)$, then $H_\nu\in C^k(M;\bbR)$.
\item If, furthermore, $\nu\rightharpoonup\mu$, then $\partial^\alpha H_\nu\to \partial^\alpha H_\mu$ pointwise for any $\vert\alpha\vert \le k$.
\item If $\nu\to\mu$ in $F$, then $\partial^\alpha H_\nu\to \partial^\alpha H_\mu$ uniformly for any $\vert\alpha\vert \le k$.
\end{enumerate}
\end{lem}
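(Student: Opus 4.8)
\textbf{Proof proposal for Lemma~\ref{lem:HConv}.}

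The plan is to treat each item in turn, reducing everything to the elementary observation that differentiation in $m$ commutes with integration against the (fixed, finite) measure $\nu$ when the kernel $U$ is sufficiently smooth in its first argument. First I would fix local coordinates on the compact manifold $M$, so that $\partial^\alpha$ makes sense; since $M$ can be covered by finitely many charts and a smooth partition of unity, it suffices to establish each claim locally, i.e.\ for $m$ ranging in a fixed coordinate ball.

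For item (1), I would apply the standard theorem on differentiation under the integral sign: writing $H_\nu(m)=\int_M U(m,m')\diff\nu(m')$, the partial derivatives $\partial^\alpha_m U(m,m')$ exist and are continuous on $M\times M$ for $\vert\alpha\vert\le k$, hence bounded (by compactness of $M\times M$) by some constant $C_\alpha$. Since $\vert\nu\vert(M)<\infty$, the dominated convergence theorem lets one differentiate under the integral, giving $\partial^\alpha H_\nu(m)=\int_M \partial^\alpha_m U(m,m')\diff\nu(m')$, and continuity of $m\mapsto\partial^\alpha H_\nu(m)$ follows from continuity of $\partial^\alpha_m U$ together with dominated convergence again. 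This proves $H_\nu\in C^k(M;\bbR)$.

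For item (2), fix $\vert\alpha\vert\le k$ and $m\in M$. The function $m'\mapsto\partial^\alpha_m U(m,m')$ is continuous on $M$, hence $\nu\rightharpoonup\mu$ gives $\partial^\alpha H_\nu(m)=\nu(\partial^\alpha_m U(m,\cdot))\to\mu(\partial^\alpha_m U(m,\cdot))=\partial^\alpha H_\mu(m)$ directly from the definition of weak convergence of measures, which is pointwise convergence in $m$. For item (3), I would use the Riesz--Markov characterisation \eqref{Riesz-Markov}: for each $m$,
\begin{equation*}
\vert\partial^\alpha H_\nu(m)-\partial^\alpha H_\mu(m)\vert
= \vert(\nu-\mu)(\partial^\alpha_m U(m,\cdot))\vert
\le \Vert\nu-\mu\Vert\;\Vert\partial^\alpha_m U(m,\cdot)\Vert_{L^\infty(M)}
\le C_\alpha\,\Vert\nu-\mu\Vert,
\end{equation*}
with $C_\alpha=\sup_{M\times M}\vert\partial^\alpha_m U\vert<\infty$ as above; the bound is independent of $m$, so $\nu\to\mu$ in $F$ yields $\partial^\alpha H_\nu\to\partial^\alpha H_\mu$ uniformly. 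I do not anticipate a genuine obstacle here; the only point requiring a little care is that the dominating functions and the constants $C_\alpha$ in the differentiation-under-the-integral and Riesz--Markov estimates are uniform in $m$ and $m'$, which is automatic from the compactness of $M\times M$ and the assumed $C^k$ regularity of $U$. One should also note that in item (2) weak convergence only gives \emph{pointwise} (not uniform) convergence of the derivatives, which is exactly what is claimed, and that passing between charts introduces only smooth changes of coordinates that preserve all the stated regularity and convergence statements.
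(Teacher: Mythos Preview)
Your proof is correct and follows essentially the same approach as the paper: dominated convergence and compactness of $M\times M$ for (a), the definition of weak convergence applied to the continuous test function $\partial_m^\alpha U(m,\cdot)$ for (b), and the total-variation bound from Riesz--Markov for (c). If anything, your argument is slightly more detailed than the paper's (you make the uniform bound $C_\alpha\Vert\nu-\mu\Vert$ explicit, and mention the chart/partition-of-unity reduction), but there is no substantive difference.
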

\begin{proof}
(a) Since $M$ is compact, $U\in C^k(M\times M;\bbR)$ implies that $\sup_{m,m'\in M\times M}\vert \partial_m^\alpha U(m,m')\vert <\infty$ for any $\vert\alpha\vert \le k$. Furthermore, $\nu(M) \le \Vert \nu\Vert$, so by dominated convergence $\partial^\alpha H_\nu(m) = \int_{M}\partial^\alpha_m U(m,m')\diff\nu(m')$ is a continuous function for $\vert\alpha\vert \le k$. \\
\noindent(b) This follows by definition of the weak convergence of measures applied to the function $\partial_m^\alpha U(m,\cdot)$ with fixed $m$. \\
\noindent(c) If $\nu$ converges in norm to $\mu$, then in particular $\sup_{m\in M}\vert \nu(\partial_m^\alpha U(m,\cdot)) - \mu(\partial_m^\alpha U(m,\cdot)) \vert\to 0$, which is the uniform convergence of $\partial^\alpha H_\nu$ to $\partial^\alpha  H_\mu$.\label{neighb}
\end{proof}

For any $\beta\in\bbR$,
\begin{equation*}
\tau_\beta^\nu(f) := \frac{\int_{M} f(m) \e^{-\beta H_\nu(m) }\diff{m}}{\int_{M} \e^{-\beta H_\nu(m) }\diff{m}},\quad f\in C^0(M),
\end{equation*}
defines a probability measure $\tau_\beta^\nu\in F$. The map $\Phi(\beta,\nu) = \nu - \tau_\beta^\nu$ introduced in \eqref{Phi} is now interpreted as $\Phi:(0,\infty]\times F\to F$, by extending it as
\begin{equation*}
\Phi(\infty,\nu):= \nu - \delta_{m_0},
\end{equation*}
where $\delta_{m_0}$ is the Dirac mass at $m_0$, i.e.~$\delta_{m_0}(f)=f(m_0)$ for any
continuous function $f$.

We shall now proceed under the general assumption of an \emph{axially symmetric} potential, namely 
$$U(m,m') = U(\cos^2\gamma),$$ 
where $\gamma$ is the angle between $m$ and $m'$, having 
a unique, strict global minimum, corresponding to $\gamma = 0$.
It follows that $H_{\delta_{m_0}}(m) = U(m,m_0)$ has a unique, strict minimum 
at $m= m_0$. We shall further suppose that 
there exists a neighbourhood $\caD$ of $\delta_{m_0}$ in $F$ such that 
$H_\nu(m)$ has a unique minimum at $m= m_0$.
It is only at the very end of this section, in the proof of Theorem~\ref{zeroK.thm}, that we restrict our attention to the concrete case of the Maier--Saupe interaction~\eqref{effpot}, which trivially satisfies the above assumptions.

With these preliminaries, we now construct a continuous branch of axisymmetric solutions 
emanating from the perfectly aligned state at zero temperature, $(\infty,\delta_{m_0})$. 
The point $m_0$ corresponds to the `director' in the liquid crystal jargon. 
By the full rotation symmetry of the problem, all $m_0$ are equivalent and we may as well choose it to be the north pole. 
We shall therefore use coordinates $(\theta,\varphi)$ such that $m_0$ corresponds to $\theta = 0$ in what follows. 
By an {\em axisymmetric solution}, we then mean a solution $(\beta,\nu)$ of $\Phi(\beta,\nu)=0$ such that $\nu$ 
does not depend on $\varphi$. 
This does not exclude the breaking of the axial symmetry. However, the problem being invariant under rotations about the director, if the symmetry were broken, 
the branch we construct would be given by an average of the $\varphi$-dependent solutions.
Note that for the Maier--Saupe potential, \cite{slast} actually proves that all solutions are indeed axisymmetric. 
We shall henceforth write $\theta$ instead of $m$ and $0$ instead $m_0$. 
We will correspondingly abuse the notation, using the same symbol for objects depending on $m$; for instance, 
$f(m)\equiv f(\theta)$, etc. As usual, we may also simply write $\delta$ for $\delta_0$.

The following proposition, which holds for an arbitrary potential $U$ in the class discussed above, 
provides a description of the asymptotic states in the low temperature limit. 
It is based on Laplace's method, as presented in Lemma~\ref{asympt.lem} in the Appendix.

\begin{prop}\label{prop:PhiAsymp}
Assume that $U\in C^\infty(M\times M)$. Then the following holds.
\begin{enumerate}
\item Consider $\nu\in F$ such that $H_\nu''(0) >0$. Then
\begin{equation*}
\Phi(\beta,\nu) \rightharpoonup \nu - \delta
\end{equation*}
as $\beta\to\infty$, in the sense of weak convergence of measures.

\item For any $g\in C^2(M)$, consider $\Phi_g: (0,\infty]\times F\to\bbR$ defined by $\Phi_g(\beta,\nu) = \Phi(\beta,\nu)(g)$. If 
$$\partial_{\theta\theta}^2U(\theta,\theta')\vert_{\theta = \theta' = 0}>0,$$ 
there is a neighbourhood $\caU\subset F$ of $\delta$ such that 
$\Phi_g\in C^0((0,\infty]\times\caU;\bbR)\cap C^1((0,\infty)\times\caU;\bbR)$ with
\begin{align*}\label{derivatives_phig}
D_\beta \Phi_g(\beta,\nu)&=\left(\tau_\beta^\nu(g H_\nu) - \tau_\beta^\nu(g)\tau_\beta^\nu(H_\nu)\right),\\
D_\nu \Phi_g(\beta,\nu)[\mu]&= \mu(g)+\beta\left(\tau_\beta^\nu(g H_\mu) - \tau_\beta^\nu(g)\tau_\beta^\nu(H_\mu)\right), \ \mu\in F.
\end{align*}
Furthermore,
$\Phi_g$ is differentiable with respect to $\nu$ at $(\infty,\nu)$ for all $\nu\in\caU$, with
$$
D_\nu \Phi_g(\infty,\nu)[\mu] = \mu(g), \ \mu\in F.
$$
\end{enumerate}
\end{prop}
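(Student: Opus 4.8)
\textbf{Proof plan for Proposition~\ref{prop:PhiAsymp}.}

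\emph{Part (a).} The plan is to test $\Phi(\beta,\nu)$ against an arbitrary $f\in C^0(M)$ and show $\Phi(\beta,\nu)(f)\to(\nu-\delta)(f)$. Since $\Phi(\beta,\nu)(f)=\nu(f)-\tau_\beta^\nu(f)$ and the $\nu(f)$ term is $\beta$-independent, it suffices to prove $\tau_\beta^\nu(f)\to f(0)$ as $\beta\to\infty$. This is precisely a Laplace-method statement: $H_\nu$ is assumed to have its unique minimum at $\theta=0$ (we are in the neighbourhood $\caD$), and $H_\nu\in C^\infty$ by Lemma~\ref{lem:HConv}(a), with $H_\nu''(0)>0$ by hypothesis, so the Gaussian approximation near $\theta=0$ applies. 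I would invoke Lemma~\ref{asympt.lem} from the Appendix (Laplace's method) to conclude that both $\int_M f\,\e^{-\beta H_\nu}\diff m$ and $\int_M \e^{-\beta H_\nu}\diff m$ are asymptotically $f(0)$ (resp.\ $1$) times the same Gaussian prefactor $\sqrt{2\pi/(\beta H_\nu''(0))}\,\e^{-\beta H_\nu(0)}$ (with the appropriate Jacobian factor $\sin\theta$ handled inside the lemma), so their ratio tends to $f(0)$. The only subtlety is that the measure is supported on the sphere and, in $(\theta,\varphi)$ coordinates with the $\sin\theta$ weight, $\theta=0$ sits on the boundary of the $\theta$-interval; the Appendix lemma is presumably stated to cover exactly this degenerate endpoint situation, so I would simply cite it. Since $f$ is arbitrary, this gives weak-$*$ convergence $\tau_\beta^\nu\rightharpoonup\delta$, hence $\Phi(\beta,\nu)\rightharpoonup\nu-\delta$.

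\emph{Part (b), regularity and derivative formulas on $(0,\infty)\times\caU$.} Here I would fix $g\in C^2(M)$ and the hypothesis $\partial^2_{\theta\theta}U|_{\theta=\theta'=0}>0$; by continuity of $U\in C^\infty$ and by Lemma~\ref{lem:HConv}(c) (uniform convergence of $H_\nu''$ as $\nu\to\delta$ in $F$), there is a neighbourhood $\caU$ of $\delta$ on which $H_\nu''(0)>0$ uniformly and $H_\nu$ has its unique minimum at $\theta=0$ — this fixes the $\caU$ in the statement. On $(0,\infty)\times\caU$, the map $(\beta,\nu)\mapsto\int_M g\,\e^{-\beta H_\nu}\diff m$ and its denominator counterpart are smooth: the $\beta$-dependence is real-analytic (dominated convergence lets us differentiate under the integral, using boundedness of $H_\nu$), and the $\nu$-dependence is handled exactly as in Lemma~\ref{derivatives.lem}, since $\nu\mapsto H_\nu$ is bounded linear and $t\mapsto\e^{-\beta t}$ is smooth with all derivatives controlled on the bounded range of $H_\nu$. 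The quotient rule then yields $\Phi_g(\beta,\nu)=\nu(g)-\tau_\beta^\nu(g)$ with
\begin{align*}
D_\beta\Phi_g(\beta,\nu)&=\tau_\beta^\nu(gH_\nu)-\tau_\beta^\nu(g)\tau_\beta^\nu(H_\nu),\\
D_\nu\Phi_g(\beta,\nu)[\mu]&=\mu(g)+\beta\bigl(\tau_\beta^\nu(gH_\mu)-\tau_\beta^\nu(g)\tau_\beta^\nu(H_\mu)\bigr),
\end{align*}
which is just a direct differentiation, the same bookkeeping already carried out in Lemma~\ref{derivatives.lem} but now with $g$ inserted and read off as a scalar. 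Continuity of these expressions in $(\beta,\nu)$ on $(0,\infty)\times\caU$ is immediate since all the building blocks are continuous.

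\emph{Part (b), behaviour at $\beta=\infty$.} This is the delicate point, and I expect it to be the main obstacle. I must show (i) $\Phi_g$ extends continuously to $\beta=\infty$ with $\Phi_g(\infty,\nu)=\nu(g)-g(0)$, and (ii) $\Phi_g$ is differentiable in $\nu$ at $(\infty,\nu)$ with $D_\nu\Phi_g(\infty,\nu)[\mu]=\mu(g)$. Continuity at $\beta=\infty$ is a uniform-in-$\nu$ version of Part (a): I would re-run the Laplace estimate from Lemma~\ref{asympt.lem}, but keeping track of the dependence on $\nu$ and checking the error terms are uniform for $\nu$ in a small enough neighbourhood of $\delta$ — this uses the uniform lower bound on $H_\nu''(0)$ and uniform control (via Lemma~\ref{lem:HConv}(c)) of $H_\nu$ and its derivatives up to the order the Laplace remainder needs. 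For the differentiability at $\beta=\infty$: the claimed derivative $\mu\mapsto\mu(g)$ is exactly $D_\nu$ of the $\beta=\infty$ formula $\nu\mapsto\nu(g)-g(0)$, so the content is that the difference quotient of $\Phi_g$ in the $\nu$-direction, at $\beta=\infty$, equals $\mu(g)$ — but $\Phi_g(\infty,\nu)=\nu(g)-g(0)$ is already \emph{affine} in $\nu$, so this is immediate once continuity at $\beta=\infty$ is in hand. (If instead one wants joint differentiability of the extended map at points $(\infty,\nu)$, one shows $\beta\,(\tau_\beta^\nu(gH_\mu)-\tau_\beta^\nu(g)\tau_\beta^\nu(H_\mu))\to 0$ as $\beta\to\infty$: a Laplace expansion gives $\tau_\beta^\nu(h)=h(0)+O(1/\beta)$ for $h\in C^2$, the leading $O(1/\beta)$ coefficient being a multiple of $h''(0)/H_\nu''(0)$; multiplying the covariance $\tau_\beta^\nu(gH_\mu)-\tau_\beta^\nu(g)\tau_\beta^\nu(H_\mu)$, which is itself $O(1/\beta)$ since both factors have the same leading term $g(0)$, resp.\ $H_\mu(0)$, by $\beta$ still gives something $O(1/\beta)\to 0$, reconciling the two one-sided formulas.) The crux throughout is the \emph{uniformity} of the Laplace asymptotics in the measure $\nu$ near $\delta$, which is why the neighbourhood $\caU$ must be chosen using the uniform convergence statement Lemma~\ref{lem:HConv}(c) rather than the merely pointwise Lemma~\ref{lem:HConv}(b).
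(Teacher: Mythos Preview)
Your plan follows the paper's own proof closely: Laplace's method (Lemma~\ref{asympt.lem}) for part~(a), the construction of $\caU$ via the uniform convergence of $H_\nu''$ from Lemma~\ref{lem:HConv}(c), the explicit derivative formulas, and the affine observation at $\beta=\infty$ are all exactly what the paper does. (The paper additionally quotes Simon's lemma, Lemma~\ref{simon.lem}, for the continuity of $\tau_\beta^\nu$ in $(\beta,\nu)$ at finite $\beta$, where you just say the building blocks are continuous; this is a minor difference.)

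There is, however, a real error in your parenthetical argument for the continuity of $D_\nu\Phi_g$ at $\beta=\infty$, i.e.\ for
\[
\beta\bigl(\tau_\beta^\nu(gH_\mu)-\tau_\beta^\nu(g)\tau_\beta^\nu(H_\mu)\bigr)\longrightarrow 0.
\]
Your claim that $\tau_\beta^\nu(h)=h(0)+O(1/\beta)$ is not true in general: the expansion in Lemma~\ref{asympt.lem}(c) has a $\beta^{-1/2}$ term proportional to $h'(0)$. And even granting an $O(1/\beta)$ covariance, multiplying by $\beta$ gives $O(1)$, not $O(1/\beta)$ as you write. The actual mechanism, which the paper isolates in Lemma~\ref{lem:cumulant}, is the observation that $H_\mu'(0)=0$ for \emph{every} $\mu\in F$ (since $U$ is axisymmetric and smooth at the pole, so $\partial_\theta U(\theta,\theta')|_{\theta=0}=0$). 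With $h=H_\mu$ satisfying $h'(0)=0$, Lemma~\ref{lem:cumulant} gives that the covariance is $o(1/\beta)$, whence $\beta$ times it tends to zero. This step is not optional: it is precisely what is cited downstream in the proof of Theorem~\ref{zeroK.thm} as ``continuous derivative with respect to $\nu$ at $(\infty,\delta)$'', which the implicit function theorem requires. So you should replace your $O(1/\beta)$ bookkeeping by an appeal to $H_\mu'(0)=0$ and Lemma~\ref{lem:cumulant}.
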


The proof relies on several lemmas. Let us first introduce a notation: for any $g\in C^0 ([0,\pi/2];\mathbb{R})$, let
\begin{equation*}
\langle g \rangle_\beta^f := \frac{\int_{0}^{\pi/2}\e^{-\beta f(\theta) }g(\theta)\sin\theta\diff{\theta}}{\int_{0}^{\pi/2}\e^{-\beta f(\theta) }\sin\theta\diff{\theta}}.
\end{equation*}

\begin{lem}\label{lem:cumulant}
For any $g,h\in C^2([0,\pi/2];\mathbb{R})$,
\begin{equation*}
\lim_{\beta\to\infty}\beta^{1/2} \left(\langle g h \rangle_\beta^f - \langle g \rangle_\beta^f  \langle h \rangle_\beta^f \right) = 0.
\end{equation*}
If, moreover, $h'(0) = 0$, then
\begin{equation*}
\lim_{\beta\to\infty}\beta \left(\langle g h \rangle_\beta^f - \langle g \rangle_\beta^f  \langle h \rangle_\beta^f \right) = 0.
\end{equation*}
\end{lem}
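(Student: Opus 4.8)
The plan is to prove both limits by a direct Laplace-method expansion of the quotients $\langle gh\rangle_\beta^f - \langle g\rangle_\beta^f\langle h\rangle_\beta^f$ near the minimum $\theta = 0$ of $f$, using the asymptotic expansion recorded in Lemma~\ref{asympt.lem} of the Appendix. Recall that $f$ has a unique, strict minimum at $\theta = 0$ with $f''(0) > 0$, and note the crucial measure-theoretic point: the weight $\sin\theta\,d\theta$ behaves like $\theta\,d\theta$ near $\theta = 0$, so the ``effective'' Laplace integral has a degenerate (linear, not quadratic) density at the critical point. Concretely, writing $f(\theta) = f(0) + \tfrac12 f''(0)\theta^2 + O(\theta^3)$ and substituting $\theta = \beta^{-1/2}u$, one finds for any $\varphi \in C^2$ that
\begin{equation*}
\int_0^{\pi/2} \e^{-\beta f(\theta)}\varphi(\theta)\sin\theta\diff\theta
= \e^{-\beta f(0)}\beta^{-1}\left[\varphi(0)\,\mathcal{I}_0 + \beta^{-1/2}\varphi'(0)\,\mathcal{I}_1 + O(\beta^{-1})\right],
\end{equation*}
where $\mathcal{I}_0 = \int_0^\infty \e^{-f''(0)u^2/2}u\diff u$ and $\mathcal{I}_1 = \int_0^\infty \e^{-f''(0)u^2/2}u^2\diff u$ are finite Gaussian moments. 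This is exactly the content of the Laplace lemma in the Appendix, which I will invoke rather than rederive.

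First I would apply this expansion with $\varphi = gh$, $\varphi = g$, $\varphi = h$, and $\varphi \equiv 1$, obtaining
\begin{equation*}
\langle\varphi\rangle_\beta^f = \varphi(0) + \beta^{-1/2}\,\frac{\mathcal{I}_1}{\mathcal{I}_0}\,\varphi'(0) + O(\beta^{-1}),
\end{equation*}
where the $O(\beta^{-1})$ remainder is uniform on the relevant compact set of $\varphi$'s (it depends only on $C^2$-bounds of $\varphi$, hence on those of $g$ and $h$). Then a direct computation gives
\begin{equation*}
\langle gh\rangle_\beta^f - \langle g\rangle_\beta^f\langle h\rangle_\beta^f
= \beta^{-1/2}\frac{\mathcal{I}_1}{\mathcal{I}_0}\Big[(gh)'(0) - g(0)h'(0) - h(0)g'(0)\Big] + O(\beta^{-1}).
\end{equation*}
But $(gh)'(0) = g'(0)h(0) + g(0)h'(0)$, so the bracket vanishes identically; hence the difference is $O(\beta^{-1})$, which immediately yields $\beta^{1/2}$ times it tends to $0$. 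This proves the first assertion. For the second, under the extra hypothesis $h'(0) = 0$, I would push the expansion of $\langle\varphi\rangle_\beta^f$ one order further, to
\begin{equation*}
\langle\varphi\rangle_\beta^f = \varphi(0) + \beta^{-1/2}\frac{\mathcal{I}_1}{\mathcal{I}_0}\varphi'(0) + \beta^{-1}\big[c_1\varphi'(0) + c_2\varphi''(0)\big] + o(\beta^{-1}),
\end{equation*}
with universal constants $c_1, c_2$ (depending only on $f$ through $f''(0), f'''(0)$ and the Gaussian moments). Forming $\langle gh\rangle_\beta^f - \langle g\rangle_\beta^f\langle h\rangle_\beta^f$ again, the $\beta^{-1/2}$-term is already known to vanish; at order $\beta^{-1}$ the $c_1$-contributions cancel by the same Leibniz identity as before, and the $c_2$-contribution is proportional to $(gh)''(0) - g(0)h''(0) - h(0)g''(0) = 2g'(0)h'(0)$, which vanishes because $h'(0) = 0$. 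Hence the whole difference is $o(\beta^{-1})$, giving $\lim_{\beta\to\infty}\beta(\langle gh\rangle_\beta^f - \langle g\rangle_\beta^f\langle h\rangle_\beta^f) = 0$.

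The main obstacle I anticipate is bookkeeping the error terms carefully enough, in particular making sure the remainders in the Laplace expansion are uniform in a way that survives division (the denominator $\int_0^{\pi/2}\e^{-\beta f}\sin\theta\diff\theta$ must be bounded below by a positive multiple of $\e^{-\beta f(0)}\beta^{-1}$ for large $\beta$, which follows from $f''(0) > 0$ and continuity of $f$) and multiplication by $\beta$ or $\beta^{1/2}$. A cleaner organisational device, which I would probably adopt to avoid tracking $c_1, c_2$ explicitly, is to observe that $\langle gh\rangle_\beta^f - \langle g\rangle_\beta^f\langle h\rangle_\beta^f = \langle (g - \langle g\rangle_\beta^f)(h - \langle h\rangle_\beta^f)\rangle_\beta^f$, i.e.\ it is the covariance of $g$ and $h$ under the probability measure $\propto \e^{-\beta f}\sin\theta\diff\theta$. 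Since that measure concentrates at $\theta = 0$ at scale $\beta^{-1/2}$, Taylor-expanding $g - \langle g\rangle_\beta^f = g'(0)(\theta - \langle\theta\rangle_\beta^f) + O(\theta^2 + \langle\theta^2\rangle_\beta^f)$ and similarly for $h$ reduces everything to estimating the variance and low moments of $\theta$, for which $\langle\theta^2\rangle_\beta^f = O(\beta^{-1})$ and $\mathrm{Var}_\beta(\theta) = O(\beta^{-1})$ from the Laplace lemma; the leading covariance is $g'(0)h'(0)\mathrm{Var}_\beta(\theta) = O(\beta^{-1})$, and it is genuinely $o(\beta^{-1})$ exactly when $g'(0)h'(0) = 0$. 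Either route delivers the result; I would present whichever makes the uniformity of the constants most transparent.
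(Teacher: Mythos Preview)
Your approach is correct and essentially identical to the paper's: both apply the Laplace expansion of Lemma~\ref{asympt.lem}(c) to $\langle gh\rangle_\beta^f$, $\langle g\rangle_\beta^f$, $\langle h\rangle_\beta^f$, gather terms order by order, and observe the cancellations via the Leibniz rule. One small bookkeeping omission in your second step: at order $\beta^{-1}$ in the product $\langle g\rangle_\beta^f\langle h\rangle_\beta^f$ there is also the cross term $(\mathcal{I}_1/\mathcal{I}_0)^2 g'(0)h'(0)$ coming from the two $\beta^{-1/2}$ contributions, which you did not list alongside the $c_1$- and $c_2$-pieces; but since it too is proportional to $g'(0)h'(0)$, it vanishes under $h'(0)=0$ and the conclusion is unaffected.
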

\begin{proof}
It suffices to apply~(\ref{LaplaceE}) to $\langle g h \rangle_\beta^f - \langle g \rangle_\beta^f  \langle h \rangle_\beta^f $ and gather terms of same order. For simplicity, we write $\langle g \rangle_\beta^f = g(0) + A g'(0) \beta^{-1/2} + (B_1 g''(0) + B_2 g'(0))\beta^{-1}$, with the obvious definition of
$A,B_1,B_2$.
\begin{align*}
\text{Order }\beta^0:\quad&(gh)(0) - g(0) h(0) \\
\text{Order }\beta^{-1/2}:\quad&A\left[(gh)'(0) - (g'(0) h(0) + g(0) h'(0))\right] \\
\text{Order }\beta^{-1}:\quad& (B_1(gh)''(0) + B_2(gh)'(0)) - A^2 g'(0)h'(0) \\ & - B_1 (g(0)h''(0) + g''(0) h(0)) - B_2(g(0)h'(0) + g'(0) h(0))
\end{align*}
The first two orders vanish for any continuously differentiable functions $g,h$, while the order $\beta^{-1}$ requires the additional condition $h'(0) = 0$ to vanish.
\end{proof}
Finally, we also recall the following lemma.
\begin{lem}[\!\!\cite{Simon:1979wt}]\label{simon.lem}
Let $K$ be a compact set and $\mu_0$ a probability measure on $K$. 
For $h\in C^0(K)$, let $\diff\mu_h:= \exp(h)\diff\mu_0 / \int_K\exp(h)\diff\mu_0$. 
Then $\Vert \mu_h - \mu_g \Vert \le \Vert h-g\Vert_{L^\infty}$, for all $h,g\in C^0(K)$.
\end{lem}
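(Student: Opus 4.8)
The plan is to deduce the estimate from a one-parameter interpolation between $\mu_g$ and $\mu_h$, which reduces the total-variation bound to a uniform estimate on a covariance; this is the usual device for perturbative control of Gibbs measures. First I would invoke the Riesz--Markov description~\eqref{Riesz-Markov} of the total-variation norm to reduce the claim to proving that $\abs{\mu_h(f) - \mu_g(f)} \le \norm{h - g}_{L^\infty}$ for every $f \in C^0(K)$ with $\norm{f}_{L^\infty} \le 1$.

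Fixing such an $f$ and writing $k := h - g$ and $h_t := g + t k$ for $t \in [0,1]$, I would then study the scalar function $m(t) := \mu_{h_t}(f) = \int_K f \e^{h_t}\diff\mu_0 \big/ \int_K \e^{h_t}\diff\mu_0$, which interpolates between $m(0) = \mu_g(f)$ and $m(1) = \mu_h(f)$. Differentiating under the integral sign---legitimate because $K$ is compact, $\mu_0$ is finite, and the integrands are bounded and smooth in $t$---gives the cumulant identity $m'(t) = \mu_{h_t}(f k) - \mu_{h_t}(f)\,\mu_{h_t}(k) = \operatorname{Cov}_{\mu_{h_t}}(f,k)$, and hence $\mu_h(f) - \mu_g(f) = \int_0^1 \operatorname{Cov}_{\mu_{h_t}}(f,k)\diff t$.

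It then remains to bound the covariance uniformly over all probability measures $\mu$ on $K$: one has $\abs{\operatorname{Cov}_\mu(f,k)} = \abs{\mu\big(f\,(k - \mu(k))\big)} \le \norm{f}_{L^\infty}\,\mu\big(\abs{k - \mu(k)}\big) \le \norm{f}_{L^\infty}\,\sqrt{\operatorname{Var}_\mu(k)} \le \norm{f}_{L^\infty}\,\norm{k}_{L^\infty}$, using Jensen's inequality and $\operatorname{Var}_\mu(k) \le \mu(k^2) \le \norm{k}_{L^\infty}^2$. With $\norm{f}_{L^\infty} \le 1$, integrating in $t$ yields $\abs{\mu_h(f) - \mu_g(f)} \le \norm{h - g}_{L^\infty}$, and taking the supremum over $f$ concludes the proof. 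I do not expect a genuine obstacle, the statement being classical; the only point requiring a little care is keeping the constant equal to $1$ and not $2$---estimating $k - \mu(k)$ crudely by its oscillation $2\norm{k}_{L^\infty}$ would introduce a spurious factor, which the passage from the $L^1(\mu)$- to the $L^2(\mu)$-norm (Jensen), together with $\operatorname{Var}_\mu(k) \le \norm{k}_{L^\infty}^2$, is designed to avoid.
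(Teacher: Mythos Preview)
Your argument is correct. The interpolation $t\mapsto\mu_{g+t(h-g)}$, the cumulant identity $m'(t)=\operatorname{Cov}_{\mu_{h_t}}(f,k)$, and the variance bound $\operatorname{Var}_\mu(k)\le\Vert k\Vert_{L^\infty}^2$ together give exactly the constant $1$; the passage through $L^2(\mu)$ to avoid the spurious factor $2$ is the right trick.

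Regarding the comparison: the paper does not supply its own proof of this lemma. It is stated as a recalled result with a citation to Simon~\cite{Simon:1979wt} and used as a black box in the proof of Proposition~\ref{prop:PhiAsymp}. Your proof is essentially the classical one from that reference, so there is no genuine divergence to discuss.
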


We are now in a position to give the

\begin{proof}[Proof of Proposition~\ref{prop:PhiAsymp}]
(a) Let $g\in C^0(M)$. It suffices to note that $\tau_\beta^\nu(g) = \langle g \rangle_\beta^{H_\nu}$ in the notation of Lemma~\ref{asympt.lem}, so that $\lim_{\beta\to\infty}(\nu - \tau_\beta^\nu)(g) = \nu(g) - g(0)$.

\noindent (b) Let $\nu_n\to\delta$ in norm. By Lemma~\ref{lem:HConv}~(c), any derivative of $H_{\nu_n}$ converges uniformly to that of $H_{\delta}$, where $H_{\delta}(m) = U(m,m_0)$. 
Let $0 < \epsilon < H_{\delta}''(0)$. There is $n_0$ such that
\begin{equation*}
H_{\nu_n}''(0) \ge H_{\delta}''(0) - \vert H_{\nu_n}''(0) - H_{\delta}''(0) \vert \ge H_{\delta}''(0) - \epsilon >0
\end{equation*}
for all $n\ge n_0$. Hence, there exists a neighbourhood $\caU$ of $\delta$ in $F$ such that 
\begin{equation}\label{lower_bd}
H_{\nu}''(0) \ge \frac{1}{2}H_{\delta}''(0) > 0 \quad \text{for all} \ \nu\in\caU. 
\end{equation}

The remainder of the proof now follows in several steps.

\medskip
\noindent\emph{Continuity at $(\beta,\nu)$ for $\beta<\infty,\nu\in F$.} Let $(\beta_n,\nu_n)\to(\beta,\nu)$. By Lemma~\ref{simon.lem} and again Lemma~\ref{lem:HConv}~(c),
\begin{equation*}
\big\Vert \tau_{\beta_n}^{\nu_n} - \tau_{\beta}^{\nu}\big\Vert
\le \big\Vert \beta_n H_{\nu_n} - \beta H_{\nu}\big\Vert_{L^\infty}\to0 \quad\text{as} \ n\to\infty,
\end{equation*}
since the sequence $(\beta_n)_{n\in\bbN}$ is bounded.

\medskip
\noindent\emph{Continuity at $(\infty,\nu)$ for $\nu\in \caU$.} Let $(\beta_n,\nu_n)\to(\infty,\nu)$, with $\beta_n<\infty$. We proceed as in the proof of Lemma~\ref{asympt.lem}~(b) to compute $\tau_{\beta_n}^{\nu_n}(g)$, using the uniform lower bound \eqref{lower_bd} on $H_{\nu_n}''(0)$ in order to restrict the integrals to the $\nu_n$-independent interval $[0,\beta_n^{-1/4})$, and conclude similarly. If $\beta_n = \infty$ for $n\ge n_0$, then by definition, $\Phi_g(\infty,\nu_n) - \Phi_g(\infty,\nu) = (\nu_n - \nu)(g) \to 0$ indeed.

\medskip
\noindent\emph{Continuous differentiability at $(\beta,\nu)$ for $\beta<\infty,\nu\in \caU$.} We first note that, 
for any $g\in C^0(M)$,
\begin{align*}
D_\beta\tau_\beta^\nu(g) &= -\left(\tau_\beta^\nu(g H_\nu) - \tau_\beta^\nu(g)\tau_\beta^\nu(H_\nu)\right), \\
D_\nu\tau_\beta^\nu[\mu](g) &= -\beta\left(\tau_\beta^\nu(g H_\mu) - \tau_\beta^\nu(g)\tau_\beta^\nu(H_\mu)\right).
\end{align*}
Let $(\beta_n,\nu_n)\to(\beta,\nu)$. We have
\begin{equation*}
\big\vert \tau_{\beta_n}^{\nu_n}(g H_{\nu_n}) - \tau_\beta^\nu(g H_\nu) \big\vert 
\le \Vert g(H_{\nu_n}- H_\nu)\Vert_{L^\infty} + \Vert \tau_{\beta_n}^{\nu_n} - \tau_{\beta}^{\nu} \Vert \Vert gH_\nu\Vert_{L^\infty} \to 0
\end{equation*}
and similarly for $\tau_{\beta_n}^{\nu_n}(g)\tau_{\beta_n}^{\nu_n}(H_{\nu_n})\to\tau_\beta^\nu(g)\tau_\beta^\nu(H_\nu)$, which proves the continuity of both partial derivatives since $(\beta_n)_{n\in\bbN}$ is bounded.

\medskip
\noindent\emph{Continuous differentiability w.r.t.~$\nu\in \caU$ at $(\infty,\nu)$.} Since 
$$
\Phi_g(\infty,\nu + \epsilon\mu) - \Phi_g(\infty,\nu) = \epsilon\mu, 
$$
it follows that $\Phi_g$
is Fr\'echet differentiable w.r.t.~$\nu\in \caU$ at $(\infty,\nu)$, with derivative
$D_\nu \Phi_g(\infty,\nu)[\mu] = \mu(g), \ \mu\in F$. 
To see this, let $(\beta_n,\nu_n)\to(\infty,\nu)$. Using
Lemma~\ref{lem:cumulant} with the uniform convergence of $H^{(\alpha)}_{\nu_n}$, and the fact that 
$H_\mu'(0) = 0$ for all $\mu\in F$ since $0$ is a local minimum for any $H_\mu$, we obtain that, indeed,
$\lim_{n\to\infty}D_\nu\Phi_g(\beta_n,\nu_n)[\mu] = \mu(g), \ \mu\in F$. This completes the proof
of Proposition~\ref{prop:PhiAsymp}.
\end{proof}

We now apply Proposition~\ref{prop:PhiAsymp} to the Maier--Saupe potential
\begin{equation*}
U(m,m') =1 -P_2(\cos(\gamma)).
\end{equation*}
By axial symmetry, $\nu(\theta,\varphi) = \nu(\theta)$, and observing that
$P_{2}(\cos\theta)=\sqrt{\frac{4\pi}{5}}Y_{20}(\theta)$, the addition formula \eqref{addition} yields
\begin{align}
H_\nu(\theta) &=1-\frac{4\pi}{5} \sum_{m=-2}^2 Y_{2m}(\theta,\varphi)
\int_{M} Y^*_{2m}(\theta',\varphi')\diff \nu(\theta') \nonumber \\
&= 1-\sqrt{\frac{4\pi}{5}}Y_{20}(\theta)
\int_{M}\sqrt{\frac{4\pi}{5}} Y_{20}(\theta')\diff \nu(\theta') 
= 1- \left\langle  P_{2}(\cos(\cdot))\right\rangle_{\nu}P_{2}(\cos\theta). \label{HnuMS}
\end{align}
Consider the function $F$ introduced in \eqref{IFTequ}.
Since $F:(0,\infty]\times[0,1]\to\bbR$, our proof of Theorem~\ref{zeroK.thm} relies upon a
very simple case of Proposition~\ref{prop:PhiAsymp}: 
all remaining information about the measure is encoded in the real parameter $\xi$. 
Ultimately, this arises from the fact that the observable $g(\theta) = \sin^2\theta$ giving rise to the order parameter is closely related to the average-field potential $H_\nu(\theta)$ itself, see~(\ref{HnuMS}). In fact, the addition theorem for higher order spherical harmonics would provide higher order potentials for which the analysis below could be applied in a straightforward way. The possibility of constructing 
a branch of solutions for arbitrary observables and potentials, namely at the level of generality of Proposition~\ref{prop:PhiAsymp}, remains an open question.

\begin{proofof} {\it Theorem~\ref{zeroK.thm}}.
Let $g(\theta) = \sin^2\theta$. We consider the bounded linear map $\xi_g: F\to[0,1]$ given by $\xi_g(\nu) = \nu(g)$. By choosing $\nu$ to be point masses at any $\theta\in[0,\frac{\pi}{2}]$, one shows that $\xi_g$ is onto. Now, $G(\beta,\xi_g(\nu)) = \Phi_g(\beta,\nu)$. By Proposition~\ref{prop:PhiAsymp}~(b), there is a neighbourhood $\caU\subset F$ of $\delta$ such that 
$G\circ (\mathrm{id},\xi_g) \in C^0((0,\infty]\times\caU;\bbR)\cap C^1((0,\infty)\times\caU;\bbR)$ and
has a continuous derivative with respect to $\nu$ at the point $(\infty,\delta)$. By the open mapping theorem, there exists $\varepsilon>0$ such that $[0,\varepsilon) = \xi_g(\caU)$. Since, furthermore, $\xi_g\in C^\infty(\caU)$, it follows that 
$G\in C^0((0,\infty]\times[0,\varepsilon))\cap C^1((0,\infty)\times[0,\varepsilon))$ and
has a continuous derivative with respect to $\xi$ at the point $(\infty,0)$. Hence, the theorem follows by applying a version of the implicit function theorem \cite[Theorem~9.3,~p.~230]{loomis}
to $G\upharpoonright (0,\infty]\times[0,\varepsilon)$ at the point 
$(\infty,0)$, where
\begin{equation*}
G(\infty,0) = 0,\qquad \partial_\xi G(\infty,0) = 1.
\end{equation*}
Observe that, even though the resulting function $\bar\xi(\beta)$, which is $C^0((\bar\beta,\infty])$, may not be $C^1$ up to $\beta=\infty$,
the smoothness of $G$ w.r.t.~$\beta<\infty$ ensures that $\xi\in C^1((\bar\beta,\infty))$ indeed.
The last statement of the theorem follows from the fact that $[0,\bar\delta)\subset [0,\varepsilon)$ belongs to the range of $\xi_g$.
\end{proofof}


\appendix

\section{Laplace's method}\label{appendix}

\begin{lem}\label{asympt.lem}
Let $f\in C^3 ([0,\pi/2];\mathbb{R})$ be such that $0$ is its unique global minimum.  Assume that $f''(0)\neq 0$. Then
the following holds.
\begin{itemize}
\item[(a)] As $\beta\to\infty$,
\begin{equation}\label{laplace1}
\int_{0}^{\pi/2}\e^{-\beta f(\theta) }\sin\theta\diff{\theta}
=\e^{-\beta f(0) }\left[\frac{1}{f''(0)}\beta^{-1}-\frac{1 }{2}\sqrt{\frac{\pi}{2}}\frac{f'''(0)}{f''(0)^{5/2}}\beta^{-3/2}+o(\beta^{-3/2})\right].
\end{equation}
\item[(b)] 
Then,
\begin{equation}
\lim_{\beta\to\infty}\langle g \rangle_\beta^f=g(0).
\end{equation}
\item[(c)] If $g\in C^2 ([0,\pi/2];\mathbb{R})$ we have, as $\beta\to\infty$,
\begin{equation}\label{LaplaceE}
\langle g \rangle_\beta^f=g(0)+\sqrt{\frac{\pi}{2}} \frac{g'(0)}{f''(0)^{1/2}}\beta^{-1/2}
+\left[ \frac{g''(0)}{f''(0)}+\frac{3\pi-16}{12}\frac{g'(0)f'''(0)}{f''(0)^2} \right]\beta^{-1}+o(\beta^{-1}).
\end{equation}
\end{itemize}
\end{lem}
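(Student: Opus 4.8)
The plan is to prove the three statements in the order given, deriving (b) and (c) from the basic Laplace asymptotics of (a). Throughout write $\phi(\theta):=f(\theta)-f(0)\ge 0$, so that $0$ is the unique zero of $\phi$ and $c_2:=\tfrac12\phi''(0)=\tfrac12 f''(0)>0$ (positivity because $0$ is a minimum and $f''(0)\neq 0$), and set $c_3:=\tfrac16 f'''(0)$. Since the factor $\e^{-\beta f(0)}$ is common to the numerator and denominator of $\langle g\rangle_\beta^f$, it is convenient in (b)--(c) to cancel it and work with integrals of the form $\int_0^{\pi/2}\e^{-\beta\phi(\theta)}a(\theta)\diff\theta$ for the relevant amplitude $a$.

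\emph{Part (a).} This is Laplace's method with the \emph{degenerate} amplitude $\sin\theta$, which vanishes to first order at the critical point $\theta=0$; this is precisely why the leading order is $\beta^{-1}$ rather than the usual $\beta^{-1/2}$. First I would localise: for small $\delta>0$, $\int_\delta^{\pi/2}\e^{-\beta\phi}\sin\theta\diff\theta=O(\e^{-\beta m_\delta})$ with $m_\delta:=\min_{[\delta,\pi/2]}\phi>0$, which is negligible against $\beta^{-3/2}$. On $[0,\delta]$ I would use $f\in C^3$ to write $\phi(\theta)=c_2\theta^2+c_3\theta^3+r(\theta)$ with $r(\theta)=o(\theta^3)$, together with $\sin\theta=\theta-\theta^3/6+O(\theta^5)$. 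Rescaling $\theta=\beta^{-1/2}t$, expanding $\e^{-\beta\phi(\beta^{-1/2}t)}=\e^{-c_2 t^2}\bigl(1-c_3\beta^{-1/2}t^3+\cdots\bigr)$ and $\sin(\beta^{-1/2}t)=\beta^{-1/2}t(1+\cdots)$, the integral reduces to Gaussian moments $\int_0^\infty t^{k}\e^{-c_2 t^2}\diff t$; collecting the $\beta^{-1}$ and $\beta^{-3/2}$ contributions and simplifying the powers of $c_2=f''(0)/2$ gives \eqref{laplace1}. The only subtle point is the remainder $r$: on the region $\theta\lesssim\beta^{-1/2+\epsilon_0}$ (with $\epsilon_0<1/6$; outside this region the Gaussian weight is super-polynomially small) one has, for $\beta$ large, $\beta|r(\theta)|\le\eta\beta\theta^3$ for every $\eta>0$, hence $|\e^{-\beta r(\theta)}-1|\le 2\eta\beta\theta^3$, and since $\sin\theta=O(\theta)$ this contributes only $O(\eta\,\beta^{-3/2})$, i.e.\ $o(\beta^{-3/2})$.

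\emph{Part (b).} This is a soft concentration statement: the probability measures $\diff\mu_\beta:=\e^{-\beta\phi(\theta)}\sin\theta\diff\theta\big/\!\int_0^{\pi/2}\e^{-\beta\phi}\sin\theta\diff\theta$ converge weakly to the Dirac mass at $0$. Concretely, $\langle g\rangle_\beta^f-g(0)=\langle g-g(0)\rangle_\beta^f$, and splitting at small $\delta$ gives $|\langle g\rangle_\beta^f-g(0)|\le\sup_{[0,\delta]}|g-g(0)|+2\|g\|_{L^\infty}\mu_\beta([\delta,\pi/2])$; the last term is $O(\e^{-\beta m_\delta})/\bigl(f''(0)^{-1}\beta^{-1}+o(\beta^{-1})\bigr)\to 0$ by part (a), so $\limsup_{\beta}|\langle g\rangle_\beta^f-g(0)|\le\sup_{[0,\delta]}|g-g(0)|$, and letting $\delta\to 0$ concludes (using only $g\in C^0$).

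\emph{Part (c).} Write $\langle g\rangle_\beta^f=g(0)+\widehat{N}_{\tilde g}/\widehat{N}_0$, where $\tilde g:=g-g(0)$, $\widehat{N}_{\tilde g}:=\int_0^{\pi/2}\e^{-\beta\phi}\tilde g(\theta)\sin\theta\diff\theta$ and $\widehat{N}_0:=\int_0^{\pi/2}\e^{-\beta\phi}\sin\theta\diff\theta$. The key structural observation is that the new amplitude $\tilde a(\theta):=\tilde g(\theta)\sin\theta$ vanishes to \emph{second} order at $0$; consequently, by the same region analysis as in (a), the $C^3$-remainder $r$ now contributes only $O(\eta\,\beta^{-2})=o(\beta^{-2})$ to $\widehat{N}_{\tilde g}$, and — although $\tilde a$ is only $C^2$ — its product form $\tilde a=\tilde g\sin$ permits the cubic expansion $\tilde a(\theta)=g'(0)\theta^2+\tfrac12 g''(0)\theta^3+o(\theta^3)$. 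Running the Laplace expansion of (a) on $\widehat{N}_{\tilde g}$ up to order $\beta^{-2}$ gives $\widehat{N}_{\tilde g}=g'(0)\sqrt{\pi/2}\,f''(0)^{-3/2}\beta^{-3/2}+\bigl[g''(0)f''(0)^{-2}-\tfrac43 f'''(0)g'(0)f''(0)^{-3}\bigr]\beta^{-2}+o(\beta^{-2})$, while (a) gives $\widehat{N}_0=f''(0)^{-1}\beta^{-1}-\tfrac12\sqrt{\pi/2}\,f'''(0)f''(0)^{-5/2}\beta^{-3/2}+o(\beta^{-3/2})$. Dividing the two series — factor $\beta^{-1}$ out of $\widehat{N}_0$ and invert to first order in $\beta^{-1/2}$ — yields the $\beta^{-1/2}$ coefficient $\sqrt{\pi/2}\,g'(0)/f''(0)^{1/2}$ and, after combining $-\tfrac43+\tfrac\pi4=\tfrac{3\pi-16}{12}$, the $\beta^{-1}$ coefficient in \eqref{LaplaceE}.

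\emph{Main obstacle.} The delicate issue throughout is the low regularity: $f\in C^3$ (resp.\ $g\in C^2$) is exactly what is needed for the $\beta^{-3/2}$ term in (a) (resp.\ the $\beta^{-1}$ term in (c)), so the Taylor remainders cannot be expanded further and must be controlled by $\eta$-type estimates on the shrinking region $\theta\lesssim\beta^{-1/2+\epsilon_0}$; in (c) one must moreover exploit the product structure $\tilde a=\tilde g\sin$ — not merely the $C^2$ regularity of $\tilde a$ — to extract the required cubic coefficient.
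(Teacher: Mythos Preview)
Your proof is correct and follows essentially the same route as the paper's: localise near $0$ with a cutoff shrinking like $\beta^{-1/4}$, Taylor-expand $f$ and the amplitude, rescale $\theta=\beta^{-1/2}t$ to reduce to Gaussian moments, and for (c) expand the numerator $\int \e^{-\beta\phi}(g-g(0))\sin\theta\diff\theta$ to order $\beta^{-2}$ before dividing by the expansion from (a). Your treatment of the $C^3$/$C^2$ remainders via the $\eta$-estimates is slightly more explicit than the paper's, which simply carries $o(\theta^3)$ terms through the rescaling, but the argument is the same.
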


\begin{proof}
(a) We first write
\begin{equation*}
\int_{0}^{\pi/2}\e^{-\beta f(\theta) }\sin\theta\diff{\theta} = \e^{-\beta f(0) } \int_{0}^{\pi/2}\e^{-\beta (f(\theta) -f(0))}\sin\theta\diff{\theta}.
\end{equation*}
By assumption, there is $\epsilon>0$ such that, for all $0<\theta<\epsilon$, $f(\theta)- f(0) \ge \frac{1}{4}f''(0)\epsilon^2$. Hence,
\begin{equation*}
\int_\epsilon^{\pi/2}\e^{-\beta (f(\theta) - f(0))}\sin\theta\diff \theta \le \bigg(\sup_{\theta\in[\epsilon,\pi/2]}\e^{-\frac{\beta}{2} (f(\theta) - f(0))}\bigg) \int_\epsilon^{\pi/2}\e^{-\frac{\beta}{2} (f(\theta) - f(0))}\sin\theta\diff \theta \le C \e^{-\frac{\beta}{8} f''(0)\epsilon^2},
\end{equation*}
where $C$ is independent of $\beta,\epsilon$. Let now $\epsilon = \beta^{-1/4}$. Then $\beta \epsilon^2 = \beta^{1/2}\to\infty$ and the above integral vanishes exponentially as $\beta\to\infty$. In a neighbourhood of the minimum, Taylor expansions yield
\begin{align*}
\int_0^\epsilon\e^{-\beta (f(\theta) - f(0))}\sin\theta\diff \theta 
&= \int_0^\epsilon \e^{-\frac{\beta}{2}f''(0)\theta^2}\left(1 - \frac{\beta}{6}f'''(0) \theta^3 + \beta o(\theta^3)\right)
\left(\theta + o(\theta^2)\right)\diff \theta \\
&= \beta^{-1/2}\int_0^{\beta^{1/2}\epsilon} \e^{-\frac{1}{2}f''(0)\zeta^2}\left(\beta^{-1/2}\zeta - \beta^{-1} \frac{f'''(0)}{6} \zeta^4 + o(\beta^{-1})\right)\diff \zeta ,
\end{align*}
where we let $\theta = \beta^{-1/2}\zeta$. Since $\beta^{1/2}\epsilon \to\infty$, the error in replacing the upper bound of integration by $\infty$ is exponentially small indeed. The integrals can finally be carried out explicitly to yield~(\ref{laplace1}).

\noindent (b) Here again, we start by writing
\begin{equation}\label{g-g0}
\langle g\rangle_\beta^f = g(0) + \langle g - g(0)\rangle_\beta^f = g(0) + \frac{\int_0^{\pi/2}\e^{-\beta(f(\theta)-f(0))}(g(\theta) - g(0))\sin\theta\diff\theta}{\int_0^{\pi/2}\e^{-\beta(f(\theta)-f(0))}\sin\theta\diff\theta}.
\end{equation}
Proceeding as in~(a), we can restrict our attention to $[0,\beta^{-1/4})$ in the integrals. By continuity of $g$, for any $\tilde\epsilon>0$, there is $\beta<\infty$ such that $\theta<\beta^{-1/4}$ implies $\vert g(\theta) - g(0)\vert< \tilde\epsilon$. Hence,
\begin{equation*}
\big\vert \langle g\rangle_\beta^f - g(0) \big\vert = \big\vert \langle g - g(0)\rangle_\beta^f\big\vert \le \tilde\epsilon.
\end{equation*}

\noindent(c) We consider again~(\ref{g-g0}). Proceeding as in~(a), we expand the relevant part of the numerator as
\begin{align*}
\int_0^\epsilon &\e^{-\beta(f(\theta)-f(0))}(g(\theta) - g(0))\sin\theta\diff\theta
\\ 
&= \beta^{-1/2}\int_0^{\beta^{1/2}\epsilon} \e^{-\frac{1}{2}f''(0)\zeta^2}
\left(g'(0)\zeta^2 \beta^{-1} + \frac{1}{2}g''(0)\zeta^3\beta^{-3/2} - \frac{1}{6}f'''(0)g'(0)\beta^{-3/2}\zeta^5 + o(\beta^{-3/2})\right)\diff\zeta \\
&= \sqrt{\frac{\pi}{2}}\frac{g'(0)}{f''(0)^{3/2}}\beta^{-3/2} + \left[\frac{g''(0)}{f''(0)^{2}} - \frac{4}{3}\frac{g'(0)f'''(0)}{f''(0)^{3}}\right]\beta^{-2} + o(\beta^{-2}).
\end{align*}
Combining this with the expansion~(\ref{laplace1}), we obtain the claim.
\end{proof}

\end{document}